\newtheorem{Theorem}{Theorem}
\newtheorem{Lemma}{Lemma}
\begin{document}

\title{ Quantum state representation based on combinatorial Laplacian matrix of star-relevant graph
}


\author{\small Jian-Qiang Li \and Xiu-Bo Chen \and Yi-Xian Yang}


\institute{Jian-Qiang Li \and Xiu-Bo Chen \and Yi-Xian Yang\at
1)Information Security Center, State Key Laboratory of Networking and Switching Technology, Beijing University of Posts and Telecommunications, Beijing 100876, China\\
\\
 \email{ljqb206@163.com}  \email{flyover100@163.com}  (Corresponding author Xiubo Chen)
}

\date{Received: date / Accepted: date}

\maketitle

\begin{abstract}

In this paper the density matrices derived from combinatorial laplacian matrix of graphs is considered. More specifically, the paper places emphasis on the star-relevant graph, which means adding certain edges on peripheral vertices of star graph. Initially, we provide the spectrum of the density matrices corresponding to star-like graph(i.e., adding an edge on star graph) and present that the Von Neumann entropy increases under the graph operation(adding an edge on star graph) and the graph operation cannot be simulated by local operation and classical communication (LOCC). Subsequently, we illustrate the spectrum of density matrices corresponding to star-alike graph(i.e, adding one edge on star-like graph) and exhibit that the Von Neumann entropy increases under the graph operation(adding an edge on star-like graph)and the graph operation cannot be simulated by LOCC. Finally, the spectrum of density matrices corresponding to star-mlike graph(i.e., adding $m$ nonadjacent edges on the peripheral vertices of star graph) is demonstrated and the relation between the graph operation and Von Neumann entropy, LOCC is revealed in this paper.

%
\keywords{Quantum state representation,  \and Combinatorial Laplacian matrix, \and  Star-relevant graph, \and Von Neumann entropy}
\end{abstract}

\section{Introduction}
\qquad 

 During the last century, different ideas have been developed to describe quantum phenomena. Hesienberg, Jordan, and Dirac brought up ``matrix mechanics" in 1925, later Schrodinger proposed the `` wave mechanics " from an entirely different starting point and later proved it equivalent to the former\cite{1}. All these efforts were made to describe the physical problem using simple mathematics. And quantum state and observable are the two main ingredients necessary to describe physical phenomena. Taking advantage of these advancements, various mathematical concepts have been proposed as quantum state representation, such as state vector, density operator\cite{2}, linear functional\cite{3} and graph\cite{4}. Until now, loads of fascinating results are provided based on these quantum state representations, for instance, Shor's quantum factoring algorithm\cite{5}, quantum teleportation\cite{6,7}, quantum dense coding\cite{8,9} and quantum cryptography protocols\cite{31,32,33} etc. Nevertheless, there are still many unknown properties waiting to be explored.

   Graph theory, as a well-developed mathematical theory\cite{10,11}, is highly applicable in the diverse areas such as network system\cite{12} and optimization\cite{13}. Thus the integration of quantum state representation and graph theory can offer some attractive features of quantum states. Generally, there are two approaches of mapping graphs into quantum states. First way is to use the vertex and edge to represent quantum state and the interaction between quantum state, respectively. A good example here is graph-state, which have been widely applied in stabilizer states, cluster states and some multipartite entangled states\cite{14}. The other way is based on a faithful mapping between discrete Laplacians and quantum states (while the converse is not necessary), which was firstly introduced by Braunstein \textit{et al}.\cite{15}. Later, by considering graphs with complex edge weights and loops with real positive weights, Adhikari \textit{et al}.\cite{17} provided a construction of graphs corresponding to general quantum states and establish a one-to-one mapping between graphs and quantum states.

  With the latter quantum representation, several aspects of quantum state representation based on graph, including separability,Von Neumann entropy and quantum operation, has been investigated. The simple combinatorial condition(degree condition) characterizes the separability of density matrices of graphs\cite{26,27,28,29,30}. Filippo \textit{et al}.\cite{18} analyzed the entropy quantity for various graphs, such as regular graphs, random graphs and put forward an open problem ``Does the star graph has smallest entropy among all connected graphs on $n$ vertices?'' Braunstein \textit{et al}.\cite{15} proposed the quantum operation needed to implement the graph operation, such as deleting or adding an edge, deleting or adding a vertex. And the LOCC operation increases the Von Neumann entropy\cite{21}.

     In this paper, by presenting the spectrum of the density operator corresponding to star-relevant graphs, which means adding certain edges on star graphs, the faithful mapping between star-relevant graphs and quantum states is  established. Also taking advantage of spectrum of the star-relevant graph, we demonstrate that the Von Neumann entropy is increased by adding an edge on star-relevant graph and the graph operation cannot be simulated by LOCC.

    The rest of the paper is organized as follows: Section II introduces preliminaries knowledge about the quantum state representation with graph; Section III presents the spectrum of density operator obtained from combinatorial laplacian matrix of the star-like graph, which means adding an edge on star graph. Also the relation between  the graph operation adding an edge on star-graph and Von Neumann entropy, LOCC is illustrated. Section IV presents the spectrums corresponding to  two types of star-alike graph, which means adding an edge on star-like graph. Meanwhile the relation between the graph operation adding an edge on star-like graph and Von neumann entropy, LOCC is demonstrated. Furthermore, the relationship of Von Neumann entropy, LOCC  between $star-alike_1$ and $star-alike_2$ graph is also investigated. Section V indicates spectrum corresponding to the star-mlike graph, which means adding $m$ nonadjacent edges on peripheral vertices of star graph. Analogously, the relation between the graph operation adding an non-adjacent edge on the peripheral vertices of star-relevant graph and Von Neumann entropy, LOCC is also displayed. Finally, we discuss the weighted graph situation and make an conclusion.

\section{Preliminaries}
\qquad \label{sec:1}
   Quantum state representation is associated with the density operator, which is a matrix in finite dimensional quantum system. And graph theory has intimated the relationship with matrix at the same time, thus quantum state representation can be built from graphs. This section introduces the basic concepts of graph, density operator and the way mapping graphs into quantum states. In order to investigate the relationship between graph operation and the feature of quantum states, the concepts about Von Neumann entropy and LOCC are introduced.

\subsection{Graph}
\qquad \label{sec:2}
 Mathematically, a graph G is a pair of sets $(V,E)$ where $V$ is a finite nonempty set containing the vertices and E is a set of edges that connect pair of vertices. The number of edges adjacent to a vertex is called its degree.

    Let $G=(V,E)$ be a simple undirected graph with set of vertices $V(G)={1,2,\cdots, n}$
    and  $E(G)\subseteq V(G)\times V(G)- \{(v,v): v\in V(G) \}$.

    The adjacency matrix associated with $G$ is denoted by $A(G)$ and defined by $[A(G)]_{uv}=1$ if $(u,v)\in E(G)$ and $[A(G)]_{uv}=0$ otherwise.

    The degree of a vertex $v\in V(G)$, denoted by $d(v)$ is the number of edges adjacent to $v$, the degree matrix of $G$ is an $n\times n$ matrix, denoted by $\bigtriangleup(G)$ and defined by $[\bigtriangleup(G)]_{u,v}=d(v)$ if $u=v$ and  $[\bigtriangleup(G)]_{u,v}=0$, otherwise.\cite{4}
 \subsection{Density operator}
\label{sec:3}
\qquad In quantum mechanics, a quantum state is described by a density operator $\rho$ in $n$ dimensions Hilbert space, which associates with a quantum system.
Actually, two conditions are necessary and sufficient to describe a statistical ensemble with density operator.
   \begin{enumerate}
    \item  normalization  $tr(\rho)=1$.
    \item $\rho$ is a positive operator $\rho \geq 0$
   \end{enumerate}
 This is helpful for deriving density operator from graph.
\subsection{Graph and Density operator}
\qquad \label{sec:4}
 The density operator can also be built from graphs in theory, the explicit construction is explained as follows:

Initially, the combinatorial Laplacian matrix of a graph $L(G)$ is the matrix
    \begin{equation}\label{*}
        L(G)=\bigtriangleup(G)-A(G),
    \end{equation}\\
where $L(G)$ is positive semi-definite according to the Gersgorin disc theorem\cite{20} and then scaling the Laplacian of a graph $G$  by the $d_{G}$,
\begin{equation}\label{*}
    \rho_{G}=\frac{L(G)}{d_{G}}=\frac{L(G)}{Tr(\triangle(G))},
\end{equation}
here $d_{G}=\sum_{v\in V(G)}d(v)$.

   $\rho_G$ is a density operator deduced from the fact that $tr(\rho_G)=1$ and $\rho_G$ is a positive operator $\rho \geq 0$.
Thus, with proper construction, the faithful mapping between quantum state and graph is established.
\subsection{Von Neumann entropy}
\label{sec:5}
 \qquad Entropy is an important way to measure the uncertainty associated with a classical probability.  When it comes to quantum world,
   Von Neumann defined the entropy of a quantum state $\rho$ by the formula :
       \begin{equation}\label{1}
        S(\rho)=-tr(\rho \log\rho)
       \end{equation}\\
   Here, logarithm is taken with respect to base two and the quantum state is described by density operator $\rho$.

  If $\lambda_{x}$ are the eigenvalues of $\rho$ then the Von Neumann's definition can be re-expressed as:
      \begin{equation}\label{*}
        S(\rho)=-\sum_x\lambda_{x}\log\lambda_{x}
      \end{equation}
  Thus, we have Von Neumann entropy to measure the uncertainty of a quantum state\cite{2}.
  \subsection{LOCC}
\label{sec:6}
\qquad In quantum communication, LOCC is a way to reduce the cost of quantum resources and plays an important role in many quantum information tasks since quantum resources are more valuable. It has been demonstrated that LOCC has a close relationship with majorization\cite{24}, which is an important mathematical method in optimization fields\cite{25}. That relationship is further clarified with Nielsen's theorem.

  We take the sight of majorization first, suppose $x=(x_1,x_2,\cdots,x_n)$, $y=(y_1,y_2,\cdots,y_n)$ and use notation $x^\downarrow$ to denote the components of $x$
  in decreasing order, for instance, $x_1^\downarrow$ means the largest component of $x$. We say $x$ is majorized by $y$,  written
  $x\prec y$, if $\sum_{j=1}^{k}x_j^\downarrow \leq \sum_{j=1}^{k}y_j^\downarrow$, here $k=1,\cdots,n$, with equality when $k=n$.

  Now we focus on  Nielsen's theorem, suppose $|\psi\rangle$ and $|\phi\rangle$ are pure entangled states shared by Alice and Bob.
  Define $\rho_\psi\equiv tr_B(|\psi\rangle\langle\psi|), \rho_\phi\equiv tr_B(|\phi\rangle\langle\phi|)$ to be the corresponding
  reduced density matrices of Alice's system, and let $\lambda_\psi$ and $\lambda_\phi$ be the vectors whose entries are the eigenvalues
  of $\rho_\psi$ and $\rho_\phi$.

 \textbf{Nielsen's theorem}
        A bipartite pure state $|\psi\rangle$ may be transformed into another pure state $|\phi\rangle$
        by LOCC if and only if $\lambda_{\psi}\prec \lambda_{\phi}$\cite{2}.

  Therefore, the criteria to transform a quantum state to another one with LOCC is often reduced to the majorization condition. Also the relation between Von Neumann entropy and the majorization condition is illustrate in \cite{21}.
  \begin{Lemma}
      If $\lambda_{\psi}\prec \lambda_{\phi}$, It holds that  $S(\rho_\psi)\geq S(\rho_\phi)$.
\end{Lemma}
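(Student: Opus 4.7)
The plan is to prove this classical Schur-concavity-type inequality by reducing majorization to a statement about doubly stochastic matrices, and then exploiting the concavity of the map $h(x)=-x\log x$ via Jensen's inequality. I will assume $\lambda_\psi$ and $\lambda_\phi$ are probability vectors of the same length (if not, pad with zeros, which does not affect the entropy since $0\log 0:=0$).

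First, I would invoke the Hardy--Littlewood--P\'olya characterization: $\lambda_\psi \prec \lambda_\phi$ if and only if there exists a doubly stochastic matrix $D=(D_{ij})$ such that $\lambda_\psi = D\,\lambda_\phi$. Concretely, denoting the eigenvalues of $\rho_\psi$ by $p_i$ and of $\rho_\phi$ by $q_j$, we have $p_i=\sum_j D_{ij}\,q_j$ with $\sum_i D_{ij}=1$ and $\sum_j D_{ij}=1$ for all $i,j$.

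Second, since $h(x)=-x\log x$ is concave on $[0,1]$, Jensen's inequality yields
\begin{equation}
h(p_i)=h\!\left(\sum_j D_{ij}\, q_j\right)\;\geq\; \sum_j D_{ij}\, h(q_j).
\end{equation}
Summing over $i$ and exchanging the order of summation gives
\begin{equation}
S(\rho_\psi)=\sum_i h(p_i)\;\geq\; \sum_j \Bigl(\sum_i D_{ij}\Bigr) h(q_j)=\sum_j h(q_j)=S(\rho_\phi),
\end{equation}
where the last equality uses that the column sums of $D$ equal one. This would complete the proof.

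The only delicate step I foresee is the appeal to the Hardy--Littlewood--P\'olya representation; an alternative route that avoids citing this theorem is to use Abel summation directly on the partial sums $\sum_{j\le k}\lambda_\psi^{\downarrow}\le \sum_{j\le k}\lambda_\phi^{\downarrow}$ together with the monotonicity of $h'$ on a suitable interval, but this is more cumbersome. I expect the cleanest presentation to be the doubly stochastic approach outlined above, since the rest of the argument is a one-line application of Jensen's inequality combined with the column-stochasticity of $D$.
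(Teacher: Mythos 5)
Your argument is correct: it is the standard proof of the Schur concavity of entropy, reducing $\lambda_\psi\prec\lambda_\phi$ to $\lambda_\psi=D\lambda_\phi$ for a doubly stochastic $D$ via Hardy--Littlewood--P\'olya and then applying Jensen's inequality to the concave function $h(x)=-x\log x$ together with column stochasticity. The paper itself gives no proof of this Lemma---it merely cites Watrous's lecture notes, where essentially this same argument appears---so your proposal supplies a valid, self-contained proof along the same lines as the implicit source, and the only points worth making explicit are the zero-padding convention $0\log 0:=0$ and the fact that both spectra are probability vectors.
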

\section{Star-like graph, adding an edge on star graph}
\label{sec:7}
\qquad

\begin{figure}
  \includegraphics[height=4cm,width=15cm]{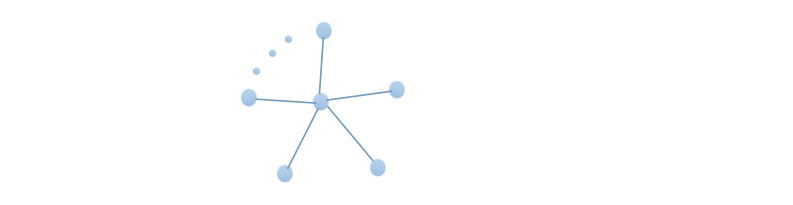}
 \caption{Star graph}
\label{fig:1}       
\end{figure}

 In this section the specific quantum state corresponding to star-like graph, which means adding one edge on star graphs(Fig.1), is introduced. Also the relationship of Von Neumann entropy between star graph and star-like graph(Fig.2) is investigated.
\begin{figure}
  \includegraphics[height=4cm,width=15cm]{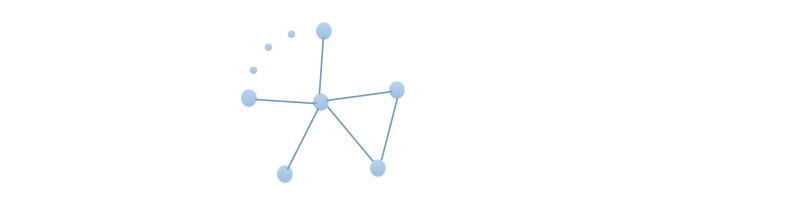}
 \caption{Star-like graph}
\label{fig:1}       
\end{figure}

 \subsection{Spectrum of the density operator corresponding to star-like graph}
 \label{sec:8}
  According to the method in section 2.3, for star-like graphs, the adjacent and degree matrix are presented as
\begin{eqnarray}
A(G)=
\left(
\begin{array}{cccccc}
0&1 & \ldots&1 &1&
\\
1& 0&\ldots & 0 &0&
\\
\vdots &\vdots & 0  &\vdots &\vdots
\\
1&0 &\ldots& 0& 1
\\
1&0 &\dots &1& 0
\end{array}
\right)
\end{eqnarray}
and
  \begin{eqnarray}
\triangle(G)=
\left(
\begin{array}{cccccc}
n-1&0 & \ldots&0 &0&
\\
0& 1&\ldots & 0 &0&
\\
\vdots &\vdots & 1  &\vdots &\vdots
\\
0&0 &\ldots& 2 & 0
\\
0&0 &\dots &0& 2
\end{array}
\right)
\end{eqnarray}\\
%
respectively.

  Hence the laplcian matrix of star-like graphs is given as follows:
        \begin{center}$L(G)=\bigtriangleup(G)-A(G)$\end{center}
  The density operator corresponding to star-like graph as follows\\
 \begin{eqnarray}
\rho(G)=\frac{L(G)}{d_{G}}=
\left(
\begin{array}{cccccc}
\frac{n-1}{2n}& \frac{-1}{2n} & \ldots&\frac{-1}{2n} &\frac{-1}{2n}&
\\
\frac{-1}{2n}& \frac{1}{2n}&\ldots & 0 &0&
\\
\vdots &\vdots & \frac{1}{2n}  &\vdots &\vdots
\\
\frac{-1}{2n}&0 &\ldots&\frac{2}{2n}& \frac{-1}{2n}
\\
\frac{-1}{2n}&0 &\dots &\frac{-1}{2n} &\frac{2}{2n}
\end{array}
\right)
\end{eqnarray}
where  $d_{G}=\sum_{v\in V(G)}d(v)=2n$.

In order to confirm the specific quantum states, it is necessary to show the spectrum of this density operator.

\begin{Theorem}
The spectrum of density operator corresponding to star-like graphs is $\{{\frac{1}{2}^{[1]}, \frac{3}{2n}^{[1]}, \frac{1}{2n}^{[n-3]}, 0^{[1]}}\}$.\\
 \end{Theorem}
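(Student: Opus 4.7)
The plan is to diagonalize $L(G)$ directly and then divide the spectrum by $d_G = 2n$ to get the spectrum of $\rho(G)$. The key idea is to exploit the automorphism group of the star-like graph, which is $S_{n-3}\times S_2$: the bare leaves $\{2,\ldots,n-2\}$ (adjacent only to the center vertex $1$) may be permuted arbitrarily, and the two special leaves $\{n-1,n\}$ (joined by the extra edge) may be swapped. Since this group commutes with $L(G)$, $\mathbb{R}^{n}$ decomposes into $L$-invariant isotypic subspaces on which the action of $L$ is easy to read off.

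First, on the $(n-4)$-dimensional subspace $W_0$ of vectors supported on the bare leaves that sum to zero, a one-line computation gives $L(e_j-e_k)=e_j-e_k$ for any two bare leaves $j,k$, so $L\vert_{W_0}=I$ and contributes eigenvalue $1$ with multiplicity $n-4$. Second, the one-dimensional antisymmetric subspace under the swap of $n-1$ and $n$, spanned by $e_{n-1}-e_n$, satisfies $L(e_{n-1}-e_n)=3(e_{n-1}-e_n)$, contributing the eigenvalue $3$.

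Third, the totally symmetric subspace is $3$-dimensional, spanned by $e_1$, $s:=\sum_{j=2}^{n-2}e_j$, and $t:=e_{n-1}+e_n$. I would compute $Le_1$, $Ls$, $Lt$ in this basis to obtain a $3\times 3$ matrix and then solve its characteristic polynomial. Since the all-ones vector $\mathbf{1}=e_1+s+t$ always lies in the kernel of a graph Laplacian, $\lambda=0$ is automatically a root, which collapses the cubic into a quadratic; a short calculation shows the remaining roots are $1$ and $n$. Thus the symmetric block contributes eigenvalues $\{0,1,n\}$.

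Collecting multiplicities yields $(n-4)+1+1+1+1=n$, so all eigenvalues of $L(G)$ have been accounted for, and its spectrum is $\{n,\,3,\,1^{[n-3]},\,0\}$. Dividing by $2n$ gives the claimed spectrum $\{\tfrac{1}{2},\,\tfrac{3}{2n},\,\tfrac{1}{2n}^{[n-3]},\,0\}$ for $\rho(G)$. The only mildly non-routine step is the $3\times 3$ reduction on the symmetric subspace, but it is immediately tamed by the free root $\lambda=0$; I do not anticipate any genuine obstacle.
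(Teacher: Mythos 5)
Your proposal is correct, and it takes a genuinely different route from the paper. The paper proves this theorem by brute force: it writes out $\det(\lambda I-\rho)$ and reduces it by cofactor expansion to the factored form $\lambda(\lambda-\tfrac{1}{2})(\lambda-\tfrac{3}{2n})(\lambda-\tfrac{1}{2n})^{n-3}$. You instead diagonalize $L(G)$ by decomposing $\mathbb{R}^n$ into invariant subspaces dictated by the automorphism group $S_{n-3}\times S_2$: the differences of bare leaves give $1^{[n-4]}$, the antisymmetric vector $e_{n-1}-e_n$ gives $3$, and the three-dimensional symmetric block (with the free root $\lambda=0$ from the all-ones vector) gives $\{0,1,n\}$; dividing by $d_G=2n$ recovers the claimed spectrum. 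I checked the one step you only sketch: in the basis $(e_1,\,s,\,t)$ the symmetric block is $\bigl(\begin{smallmatrix} n-1 & -(n-3) & -2\\ -1 & 1 & 0\\ -1 & 0 & 1\end{smallmatrix}\bigr)$, whose nonzero eigenvalues have sum $n+1$ (trace minus $0$) and product $n$ (sum of principal $2\times 2$ minors $=2+(n-3)+1$), hence are $1$ and $n$ as you assert. Your approach buys two things the paper's does not: it produces explicit eigenvectors, and it scales cleanly to the later theorems --- for the star-mlike graph each added nonadjacent edge contributes one antisymmetric eigenvector of eigenvalue $3$ and the symmetric block argument is unchanged, whereas the paper must redo an increasingly delicate determinant expansion for each case. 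The paper's determinant calculation is more self-contained but offers no structural insight and is the step most prone to algebraic slips.
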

\begin{proof}
:The finding of the spectrum is equivalent to solve the equation\\

 $det(\lambda I-\rho)=0$\\
 that is\\

\begin{eqnarray}
det(\lambda I-\rho)=
\left|
\begin{array}{cccccc}
\lambda-\frac{n-1}{2n}& \frac{1}{2n} & \ldots&\frac{1}{2n} &\frac{1}{2n}&
\\
\frac{1}{2n}& \lambda-\frac{1}{2n}&\ldots & 0 &0&
\\
\vdots &\vdots & \lambda-\frac{1}{2n}  &\vdots &\vdots
\\
\frac{1}{2n}&0 &\ldots& \lambda- \frac{2}{2n}& \frac{1}{2n}
  \\
\frac{1}{2n}&0 &\dots &\frac{1}{2n} & \lambda-\frac{2}{2n}
\end{array}
\right|=0
\end{eqnarray}\\
 $=(\lambda-\frac{n-1}{2n})(\lambda-\frac{1}{2n})^{n-3}[(\lambda-\frac{2}{2n})^{2}-(\frac{1}{2n})^{2}]$\\
 $-(n-3)(\frac{1}{2n})^{2}(\lambda-\frac{1}{2n})^{n-4}[(\lambda-\frac{2}{2n})^{2}-(\frac{1}{2n})^{2}]$\\
 $-2(\frac{1}{2n})^{2}(\lambda-\frac{1}{2n})(\lambda-\frac{3}{2n})(\lambda-\frac{1}{2n})^{n-4}$\\
 $=(\lambda-\frac{n-1}{2n})(\lambda-\frac{1}{2n})^{n-3}(\lambda-\frac{3}{2n})$\\
 $-(n-3)(\frac{1}{2n})^{2}(\lambda-\frac{1}{2n})^{n-3}(\lambda-\frac{3}{2n})$\\
 $-2(\frac{1}{2n})^{2}(\lambda-\frac{1}{2n})(\lambda-\frac{3}{2n})(\lambda-\frac{1}{2n})^{n-4}$\\
 $=(\lambda-\frac{1}{2n})^{n-3}(\lambda- \frac{3}{2n})[(\lambda-\frac{n-1}{2n})(\lambda-\frac{1}{2n})-(n-3)\frac{1}{(2n)^{2}}]$\\
 $=(\lambda-\frac{1}{2n})^{n-3}(\lambda- \frac{3}{2n})(\lambda^{2}-\frac{1}{2}\lambda)$\\
 $=\lambda(\lambda-\frac{1}{2})(\lambda-\frac{3}{2n})(\lambda-\frac{1}{2n})^{n-3}$\\
 $=0$\\

 Clearly,  the roots are $\{{\frac{1}{2}^{[1]}, \frac{3}{2n}^{[1]}, \frac{1}{2n}^{[n-3]}, 0^{[1]}}\}$, which implies that the spectrum of density operator corresponding to star-like graphs is $\{{\frac{1}{2}^{[1]}, \frac{3}{2n}^{[1]}, \frac{1}{2n}^{[n-3]}, 0^{[1]}}\}$.  \qed
 \end{proof}

 The faithful (and exact) mapping and the specific quantum states are now connected with star-like graph. Also the preparation work for investigation the relationship between star-like graphs and quantum states is ready to deploy.

\subsection{Graph operation and Von Neumann entropy, LOCC}
\qquad \label{sec:9}

The aforementioned result presents only the quantum states representation survey and emphasizes on the specific quantum states corresponding to star-like graphs. A systematic investigation about the relation of graph operation and feature of quantum states remains to be considered. The graph operation is a mapping that transforms one graph to another, which can be implemented by quantum operation\cite{15}.  In this section, we limit ourselves to the proof of the relation between a particular graph operation and Von Neumann entropy, LOCC.
\subsubsection{ Graph operation and Von Neumann entropy}
\label{sec:10}
\qquad
The graph operation is an addition of edge to star graph, which transforms star graph into star-like graph, can be implemented by quantum operation\cite{15}. When comeing to the relationship with Von Neumann entropy, then

  \begin{Theorem} The corresponding Von Neumann entropy increases under the graph operation adding an edge on star graph.\\
   \end{Theorem}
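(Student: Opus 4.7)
The plan is to derive closed-form expressions for both von Neumann entropies from the known spectra and then verify the strict inequality $S(\rho_{\text{star-like}}) > S(\rho_{\text{star}})$ by a direct analytic comparison.

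First, I would compute the spectrum of the density matrix of the star graph $K_{1,n-1}$ itself. Its combinatorial Laplacian has the classical spectrum $\{0,\, 1^{[n-2]},\, n\}$, and since $d_{G} = 2(n-1)$ this gives
\[
\mathrm{spec}(\rho_{\text{star}}) = \Bigl\{0,\ \frac{1}{2(n-1)}^{[n-2]},\ \frac{n}{2(n-1)}\Bigr\}.
\]
This is either cited or rederived in a few lines by exploiting the leaf-permutation symmetry, along the same lines as the characteristic-polynomial computation in the proof of Theorem~1. Substituting both spectra into $S(\rho) = -\sum_x \lambda_x \log \lambda_x$ and collecting the logarithms of the denominators yields the closed forms
\begin{align*}
S(\rho_{\text{star}}) &= \log\bigl(2(n-1)\bigr) - \frac{n}{2(n-1)}\log n, \\
S(\rho_{\text{star-like}}) &= \frac{1}{2}\log(4n) - \frac{3\log 3}{2n},
\end{align*}
which is a routine algebraic simplification.

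The heart of the argument is then the strict inequality $S(\rho_{\text{star-like}}) > S(\rho_{\text{star}})$ for every admissible $n$, and this is where I expect the main obstacle to lie. A naive appeal to Schur-concavity of entropy is unavailable, because a direct check at $n=4$ shows that $\lambda_{\text{star-like}}$ is not majorized by $\lambda_{\text{star}}$: the sum of the two largest eigenvalues equals $\tfrac{7}{8}$ for the star-like spectrum but only $\tfrac{5}{6}$ for the star spectrum, so the majorization criterion fails at $k=2$. Moreover, the numerical gap shrinks monotonically from roughly $0.15$ at $n=4$ to $0$ as $n \to \infty$, so any coarse estimate will fail in the tail and the comparison must be done analytically.

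To handle this, I would set $f(n) := S(\rho_{\text{star-like}}) - S(\rho_{\text{star}})$ and, using $\log\bigl(2(n-1)\bigr) = 1 + \log n + \log(1 - 1/n)$, rewrite it as
\[
f(n) = \frac{\log n}{2(n-1)} - \log\!\Bigl(1 - \frac{1}{n}\Bigr) - \frac{3\log 3}{2n}.
\]
After verifying the base case $f(4)>0$ by direct substitution, I would bound the middle term from below using the power-series estimate $-\log(1 - 1/n) \geq 1/n + 1/(2n^2)$, reducing the positivity of $f(n)$ for $n \geq 5$ to an elementary inequality. An alternative is to differentiate $f$ as a real function on $[4,\infty)$ and combine $f'(n)<0$ with $\lim_{n\to\infty} f(n) = 0^{+}$. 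Either route leaves the remaining work as routine calculus, with the delicate step being the verification that the chosen lower bound remains tight enough in the large-$n$ regime, where the $\log n$ terms in the two entropies nearly cancel.
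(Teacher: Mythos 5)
Your proposal is correct and follows essentially the same route as the paper: both derive the closed forms of the two entropies from the spectra and prove positivity of the difference $f(n)$, with your second finishing option (showing $f'(n)<0$ together with $\lim_{n\to\infty}f(n)=0^{+}$ and a positive base case) being literally the paper's argument, and your primary finish via the lower bound $-\log(1-1/n)\geq 1/n+1/(2n^2)$ being a slightly more self-contained variant that avoids the paper's somewhat delicate derivative bookkeeping. The only small omission is the case $n=3$ (star-like graph equal to the triangle), which your base case $f(4)>0$ does not cover but which follows by the same direct substitution, $f(3)=\tfrac{3}{4}\log 3-1>0$.
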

   \begin{proof}
   :We have already known that $\{{\frac{n}{2n-2}^{[1]},  \frac{1}{2n-2}^{[n-2]}, 0^{[1]}}\}$ is the spectrum of density operator corresponding to star graphs\cite{11,18} and the spectrum of density operator corresponding to star-like graphs is $\{{\frac{1}{2}^{[1]}, \frac{3}{2n}^{[1]}, \frac{1}{2n}^{[n-3]}, 0^{[1]}}\}$.
    According to the definition of Von Neumann entropy, \\
$S(\rho_{star})=-\sum_x\lambda_{x}\log\lambda_{x} $\\
                    $=-\frac{n}{2n-2}\log(\frac{n}{2n-2})-(n-2)\frac{1}{2n-2}\log(\frac{1}{2n-2})$\\
                    $=\frac{n}{2n-2}\log(\frac{2n-2}{n})+(n-2)\frac{1}{2n-2}\log(2n-2)$\\
                    $=\frac{1}{2}\log(2-\frac{2}{n})+\frac{1}{2}\log(2n-2)-\frac{1}{2n-2}\log(n)$\\
    \\ the entropy corresponding to quantum states represented by star-like graphs is given as follows\\
     $ S(\rho_{star-like})=-\sum_x\lambda_{x}\log\lambda_{x}$\\
                    $ =-\frac{1}{2}\log(\frac{1}{2})-\frac{3}{2n}\log(\frac{3}{2n})-(n-3)\frac{1}{2n}\log(\frac{1}{2n})$\\
                    $=\frac{1}{2}\log2+\frac{3}{2n}\log(\frac{2n}{3})+\frac{n-3}{2n}\log(2n)$\\
                    $=\frac{1}{2}\log2+\frac{3}{2n}\log(\frac{1}{3})+\frac{1}{2}\log(2n)$\\
      \\ Then,\\
     $ S(\rho_{star-like})-S(\rho_{star})=$\\
     $[\frac{1}{2}\log2-\frac{1}{2}\log(2-\frac{2}{n})]$
                    $+[\frac{1}{2}\log(2n)-\frac{1}{2}\log(2n-2)]$
                    $+[(-\frac{3}{2n}\log3+\frac{1}{2n-2}\log n)]$\\

    $ (S(\rho_{star-like})-S(\rho_{star}))'$\\
    $=[-\frac{1}{2n-2} +\frac{1}{2n}]+[\frac{1}{2n}-\frac{1}{2n-2}]+[\frac{3}{2}\frac{1}{n^2}\log3+\frac{1}{(2n-2)n}-\frac{1}{2(n-1)^2}\log(n)]$\\
    $ =\frac{1}{n}-\frac{1}{n-1}+\frac{3}{2}\frac{1}{n^2}\log3+\frac{1}{(2n-2)n}-\frac{1}{2(n-1)^2}\log(n)$\\
    $=\frac{3}{2}\frac{1}{n^2}\log3-\frac{1}{(2n-2)n}-\frac{1}{2(n-1)^2}\log(n)$\\
    $=\frac{3\log3}{2 n^2}-\frac{n\log(n)+n-1}{2(n-1)^2 n}  $\\
    $ (S(\rho_{star-like})-S(\rho_{star}))'<0$ when $n\geq 3$. Hence, the entropy difference is decreasing with each increment in $n$. Also,
     $\lim_{n\rightarrow \infty} [S(\rho_{star-like})-S(\rho_{star})]=0$ and when $n=3$, $[S(\rho_{star-like})-S(\rho_{star})]>0$,
     hence the entropy difference always remains positive, which implies that  Von Neumann entropy increases under the graph operation by adding an edge on star graph. \qed
\end{proof}

The relationship between Von Neumann entropy and this graph operation is revealed in this section. As a byproduct, the theorem also partially answers the open problem put forward in\cite{18} `` Does the star graph has the smallest entropy among all connected graphs on $n$ vertices?'' And then, ``what is the internal relationship between graph operation and quantum operation?", more specifically, we want to know the relationship between this graph operation and LOCC.

\subsubsection{Graph operation and LOCC}
\label{sec:11}
\qquad A systematic investigation on the relationship between graph operation and LOCC goes beyond the scope of this section. Instead we focus on the relation of the graph operation, adding an edge on star graph and LOCC. Concretely, we try to answer the question ``Whether this graph operation can be simulated by LOCC?"

\begin{Theorem} Suppose $|\psi\rangle$ and $|\phi\rangle$ are pure entangled states shared by Alice and Bob. Let $\rho_{star}=tr_B(|\psi\rangle\langle\psi|)$,
    $\rho_{star-like}=tr_B(|\phi\rangle\langle\phi|)$ corresponding to Alice's system, The quantum states  cannot be transformed into states represented by star-like
    graph by the way of LOCC when $n\geq 4$.\\
\end{Theorem}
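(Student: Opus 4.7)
The plan is to apply Nielsen's theorem directly: since a bipartite pure state $|\psi\rangle$ can be transformed into $|\phi\rangle$ by LOCC if and only if $\lambda_{\psi}\prec \lambda_{\phi}$, it suffices to exhibit a single index $k$ at which $\sum_{j=1}^{k}\lambda_{star,j}^{\downarrow}$ exceeds $\sum_{j=1}^{k}\lambda_{star-like,j}^{\downarrow}$. The natural candidate is $k=1$, because the star graph's density operator has a conspicuously large top eigenvalue $\frac{n}{2(n-1)}$ while the top eigenvalue of the star-like graph's density operator is only $\frac{1}{2}$.

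First I would arrange both spectra in decreasing order. The spectrum of $\rho_{star}$ is $\{\frac{n}{2(n-1)}, \frac{1}{2(n-1)}^{[n-2]}, 0\}$, already listed in decreasing order. For $\rho_{star-like}$, Theorem~1 gives $\{\frac{1}{2}, \frac{3}{2n}, \frac{1}{2n}^{[n-3]}, 0\}$; the hypothesis $n\geq 4$ is precisely what forces $\frac{1}{2} > \frac{3}{2n} > \frac{1}{2n} > 0$, so this listing is likewise decreasing. With the ordering fixed, the $k=1$ comparison reduces to $\frac{n}{2(n-1)}$ versus $\frac{1}{2}$, and a one-line computation gives $\frac{n}{2(n-1)} - \frac{1}{2} = \frac{1}{2(n-1)} > 0$. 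Hence majorization fails already at $k=1$, and Nielsen's theorem forbids the LOCC transformation.

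There is no genuinely hard step; the only possible pitfall is conceptual. Theorem~2 showed $S(\rho_{star-like}) > S(\rho_{star})$, and one might naively expect that any entropy-non-decreasing transition should be LOCC-realizable. The point is that Nielsen's criterion is strictly stronger than entropy monotonicity: a violation of majorization can occur at the top of the spectrum while the full entropies still obey the expected inequality, and that is exactly what happens here. The whole argument therefore reduces to correctly ordering the eigenvalues and computing one difference, with $n\geq 4$ serving only to make the claimed decreasing order of the star-like spectrum unambiguous.
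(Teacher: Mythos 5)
Your $k=1$ check is arithmetically correct: the star spectrum's top eigenvalue $\frac{n}{2(n-1)}$ exceeds the star-like spectrum's top eigenvalue $\frac{1}{2}$ by $\frac{1}{2(n-1)}$, so $\lambda_{star}\not\prec\lambda_{star-like}$ and Nielsen's theorem rules out $|\psi\rangle\to|\phi\rangle$. As a proof of the statement read literally, this works --- but note that it works for every $n\ge 2$, so the hypothesis $n\ge 4$ plays no role in your argument, which should be a warning sign. In fact, for this direction the conclusion is already immediate from results the paper has in hand: Lemma~1 says $\lambda_\psi\prec\lambda_\phi$ forces $S(\rho_\psi)\ge S(\rho_\phi)$, while Theorem~2 gives $S(\rho_{star-like})>S(\rho_{star})$; the contrapositive kills $|\psi\rangle\to|\phi\rangle$ with no eigenvalue bookkeeping at all. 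Your closing conceptual paragraph actually has this backwards: you say ``the full entropies still obey the expected inequality,'' but the expected inequality for $\psi\to\phi$ is $S(\rho_\psi)\ge S(\rho_\phi)$, which Theorem~2 shows is violated. There is no gap between majorization and entropy monotonicity in this direction --- entropy monotonicity alone already forbids the transformation.

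The paper's proof is doing something different, and that is where the threshold $n\ge 4$ comes from. Although it announces that it is testing $\lambda_\psi\prec\lambda_\phi$, every inequality it actually verifies has the form (partial sum of $\lambda_{star}$) $\ge$ (partial sum of $\lambda_{star-like}$): the generic one, $\frac{n+k}{2n-2}\ge\frac{n+k+2}{2n}$, holds iff $n\le k+2$, so the full chain holds iff $n\le 3$. That chain is precisely the condition $\lambda_{star-like}\prec\lambda_{star}$, i.e., LOCC-convertibility of the star-like state \emph{into} the star state. That is the non-trivial direction: there the entropy decreases as LOCC requires, so Lemma~1 gives no obstruction, and one genuinely has to locate the index at which majorization fails (the second partial sum, once $n\ge 4$). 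Your proof does not recover this. If you want your argument to carry the actual content behind the $n\ge 4$ hypothesis, you should run the partial-sum comparison in the reverse direction as the paper (implicitly) does, and ideally flag the direction inconsistency between the theorem's wording and its proof.
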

\begin{proof}
:The density operator and spectrum of the density operator corresponding to star and star-like graphs are given as follows:

   $\rho_{star}=tr_B(|\psi\rangle\langle\psi|)$ and
   $\lambda_{\psi}=\{{\frac{n}{2n-2}^{[1]}, \frac{1}{2n-2}^{[n-2]}, 0^{[1]}}\}$

 $\rho_{star-like}=tr_B(|\phi\rangle\langle\phi|)$  and
 $\lambda_{\phi}=\{{\frac{1}{2}^{[1]}, \frac{3}{2n}^{[1]}, \frac{1}{2n}^{[n-3]}, 0^{[1]}}\}$.\\
   With Nielsen theorem, the issue to decide the capability of $|\psi\rangle$ being transformed to $|\phi\rangle$ by LOCC is
   equivalent to determine whether $\lambda_\psi\prec\lambda_\phi$.

 Thus, we have\begin{center} $\frac{n}{2n-2}\geq\frac{1}{2}$      for each $n$ \end{center}
             \begin{center}  $\frac{n+1}{2n-2}\geq\frac{n+3}{2n}$  for $n\leq3$ \end{center}
             \begin{center} $\frac{n+2}{2n-2}\geq\frac{n+4}{2n}$  for $n\leq4$.  \end{center}
              \begin{center} $ \vdots  $ \end{center}
              \begin{center} $\frac{n+k}{2n-2}\geq\frac{n+k+2}{2n}$ for $n\leq k+2$ \end{center}

Clearly, $\lambda_\psi\prec\lambda_\phi$  if and only if $2\leq n\leq 3$, which implies that $|\psi\rangle$ and $|\phi\rangle$ cannot be transformed by LOCC when $n\geq 4$. For simplicity, we say the graph operation cannot simulated by LOCC when $n\geq 4$ since the graph operation is the only variable that discriminate $|\psi\rangle$ from $|\phi\rangle$.\qed
 \end{proof}

 The relationship between graph operation and Von Neumann entropy, LOCC is revealed and a comprehensive understanding on quantum states representation with graphs is provided in this section. Thus the result leads towards a new paradigm of investigating quantum states with graph theory. A natural problem comes subsequently`` What will happen to the Von Neumann entropy by continuing to add an edge on star-like graph and what is the relation between graph operation and LOCC? "

 \section{Star-alike graph, adding an edge on star-like graph}
\label{sec:12}

  In this section, we investigate the specific quantum state corresponding to the star-alike graph(Fig.3 and Fig.4), which means adding an edge on star-like graph. In fact there are two ways for adding an edge on star-like graphs, one is adding adjacent edge on the peripheral vertices while the other is to add nonadjacent edge.
  Similarly, the relationship between the graph operation, i.e., adding an edge on star-like graph,and Von Neumann entropy, LOCC is illustrated. Furthermore, we also present the relationship of Von Neumann entropy and LOCC between quantum states corresponding to the addition of adjacent edge and nonadjacent edge on star-like graph.

  \begin{figure}
  \includegraphics[height=4cm,width=15cm]{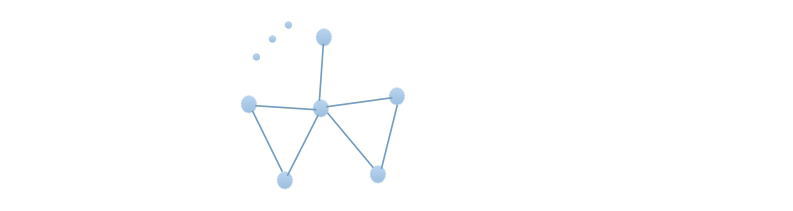}
 \caption{Star-alike1 graph}
\label{fig:1}       
\end{figure}
\begin{figure}
  \includegraphics[height=4cm,width=15cm]{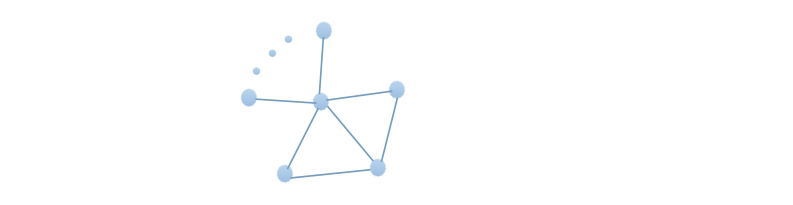}
 \caption{Star-alike2 graph}
\label{fig:1}       
\end{figure}

  \subsection{Spectrum of density operator corresponding to star-alike graph}
  \label{sec:13}
  Similarly, according to the method in section 2.3, we have
 \begin{Theorem}
The spectrum of density operator corresponding to the addition of an adjacent edge  or an nonadjacent edge on the peripheral vertices of star-like graphs is
$\{{\frac{n}{2n+2}^{[1]}, \frac{3}{2n+2}^{[2]}, \frac{1}{2n+2}^{[n-4]}, 0^{[1]}}\}$ or
  $\{{\frac{n}{2n+2}^{[1]}, \frac{4}{2n+2}^{[1]},\frac{2}{2n+2}^{[1]}, \frac{1}{2n+2}^{[n-4]}, 0^{[1]}}\}$.
 \end{Theorem}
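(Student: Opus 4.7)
The plan is to mimic the proof of Theorem 1: write down the combinatorial Laplacian and the corresponding normalized density matrix explicitly for each of the two star-alike graphs, and then compute $\det(\lambda I - \rho)$ by cofactor expansion and read off the roots. Both star-alike graphs have $n$ vertices and $n+1$ edges, so the normalizing factor is $d_G = 2(n+1) = 2n+2$. In the adjacent case the degree sequence is $(n-1, 3, 2, 2, 1, \ldots, 1)$ with $n-4$ ones, and in the nonadjacent case it is $(n-1, 2, 2, 2, 2, 1, \ldots, 1)$ with $n-5$ ones. In either case the claimed multiplicities sum to $n$ and the claimed eigenvalues sum to $1$, which is an easy preliminary consistency check.

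The first step is to separate each $\rho(G)$ into a ``core'' block containing the rows and columns of the center and of the peripheral vertices incident to the two added edges, together with the remaining ``leaf'' rows and columns of the unaltered degree-one peripheral vertices. The leaf rows all have exactly the same shape as in the proof of Theorem 1: diagonal entry $\lambda - 1/(2n+2)$ and a single $1/(2n+2)$ in the center column, zeros elsewhere. The cofactor expansion used in Theorem 1 generalises directly, and after collecting terms the characteristic polynomial factorises as $(\lambda-1/(2n+2))^{n-4}$ (respectively $(\lambda-1/(2n+2))^{n-5}$) times a compact polynomial coming from the $4\times 4$ (respectively $5\times 5$) core block.

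Before grinding through the core polynomial I would exploit the reflection symmetries of the two graphs to split it. In the nonadjacent case, the involutions swapping endpoints within each of the two disjoint peripheral edges give two independent antisymmetric eigenvectors $e_{v_1}-e_{v_2}$ and $e_{v_3}-e_{v_4}$, each producing Laplacian eigenvalue $3$ by a one-line row check; this yields $3/(2n+2)$ with multiplicity $2$ essentially for free. In the adjacent case, the reflection swapping the two end vertices of the peripheral path gives the single antisymmetric eigenvector $e_{v_1}-e_{v_3}$ with Laplacian eigenvalue $2$, producing $2/(2n+2)$. The main obstacle is the final factorisation of the remaining symmetric block in each case. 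I would deflate it using the two universal Laplacian eigenvectors—the all-ones vector with eigenvalue $0$, and a ``center-heavy'' combination responsible for the large root $n/(2n+2)$—which leaves only a small residual polynomial whose last roots ($4/(2n+2)$ or $1/(2n+2)$, plus additional copies of $1/(2n+2)$) can be read off by hand. A running consistency check throughout is that the eigenvalues sum to $tr(\rho)=1$ and that the total multiplicity is $n$.
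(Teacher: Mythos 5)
Your plan is sound and the spectra you would obtain are the correct ones, but your route is genuinely different from the paper's. The paper simply writes out $\det(\lambda I-\rho)$ for each of the two graphs and grinds through a cofactor expansion until the characteristic polynomial factors; there is no use of symmetry or explicit eigenvectors. You instead exhibit eigenvectors directly: the antisymmetric vectors $e_{v_1}-e_{v_2}$, $e_{v_3}-e_{v_4}$ (Laplacian eigenvalue $3$, twice) for the two disjoint peripheral edges, $e_{v_1}-e_{v_3}$ (eigenvalue $2$) for the peripheral path, leaf differences (eigenvalue $1$), the all-ones vector (eigenvalue $0$), and the center-heavy vector $(n-1)e_c-\sum_{v\neq c}e_v$ (eigenvalue $n$, which exists for any join with $K_1$); all of these check out by a one-line row computation, and the residual block is then pinned down by trace/orthogonality. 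This buys transparency (each eigenvalue is attached to a visible symmetry of the graph) and generalizes immediately to the star-$m$like case of Theorem~7, at the cost of having to argue that you have found a full set of $n$ independent eigenvectors rather than reading multiplicities off a factored polynomial. One caution: be precise about how many copies of $\lambda-\tfrac{1}{2n+2}$ the leaf rows give you for free --- $n-4$ leaves yield only $n-5$ independent difference vectors, so one copy of the eigenvalue $\tfrac{1}{2n+2}$ must come out of your residual core block, as your parenthetical ``plus additional copies'' tacitly acknowledges.

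One substantive discrepancy to flag rather than to fix on your side: your assignment of spectra to cases is the reverse of the theorem's literal wording. You derive $\{\tfrac{n}{2n+2},\tfrac{4}{2n+2},\tfrac{2}{2n+2},\tfrac{1}{2n+2}^{[n-4]},0\}$ for the \emph{adjacent} configuration (peripheral path $P_3$, Laplacian spectrum $\{n,4,2,1^{[n-4]},0\}$) and $\{\tfrac{n}{2n+2},\tfrac{3}{2n+2}^{[2]},\tfrac{1}{2n+2}^{[n-4]},0\}$ for the \emph{nonadjacent} one (two disjoint edges, spectrum $\{n,3,3,1^{[n-4]},0\}$), whereas the statement pairs them the other way around. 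Your pairing is the mathematically correct one: the paper's first determinant, although introduced under the ``adjacent'' heading, visibly contains two disjoint $2\times2$ blocks of degree-$2$ vertices and hence is the nonadjacent graph, and the paper's own Theorem~6 later attributes the $\{3,3\}$ spectrum to the nonadjacent case. So the set of two spectra in the theorem is right, the labels in its statement are swapped, and your proof corrects this rather than contradicting it.
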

\begin{proof}
:To find the spectrum of density operator corresponding to the addition of an adjacent edge on the peripheral vertices of star-like graphs is equal to solve the equation:
 $det(\lambda I-\rho)=0$\\
 That is\\

\begin{eqnarray}
det(\lambda I-\rho)=
\left|
\begin{array}{cccccccc}
\lambda-\frac{n-1}{2n}& \frac{1}{2n} & \ldots&\frac{1}{2n} &\frac{1}{2n}&\frac{1}{2n} &\frac{1}{2n}
\\
\frac{1}{2n}& \lambda-\frac{1}{2n}&\ldots & 0 &0&0&0
\\
\vdots &\vdots & \lambda-\frac{1}{2n}  &\vdots &\vdots&\vdots &\vdots
\\
\frac{1}{2n}&0 &\ldots& \lambda- \frac{2}{2n}& \frac{1}{2n}&0&0
  \\
\frac{1}{2n}&0 &\dots &\frac{1}{2n} & \lambda-\frac{2}{2n}&0&0
\\
\frac{1}{2n}&0 &\ldots&0&0 & \lambda- \frac{2}{2n}& \frac{1}{2n}
  \\
\frac{1}{2n}&0 &\dots &0&0 &\frac{1}{2n} & \lambda-\frac{2}{2n}
\end{array}
\right|=0
\end{eqnarray}\\
 $=(\lambda-\frac{n-1}{2n+2})(\lambda-\frac{1}{2n+2})^{n-5}[(\lambda-\frac{2}{2n+2})^{2}-(\frac{1}{2n+2})^{2}]^{2}$\\
 $-(n-5)(\frac{1}{2n+2})^{2}(\lambda-\frac{1}{2n+2})^{n-6}[(\lambda-\frac{2}{2n+2})^{2}-(\frac{1}{2n+2})^{2}]^{2}$\\
 $-4(\frac{1}{2n+2})^{2}(\lambda-\frac{1}{2n+2})(\lambda-\frac{3}{2n+2})[(\lambda-\frac{2}{2n+2})^{2}-(\frac{1}{2n+2})^{2}]$\\
 $=(\lambda-\frac{n-1}{2n+2})(\lambda-\frac{1}{2n+2})^{n-4}(\lambda-\frac{3}{2n+2})^{2}(\lambda-\frac{1}{2n+2})$\\
 $-(n-1)(\frac{1}{2n+2})^{2}(\lambda-\frac{1}{2n+2})^{n-4}(\lambda-\frac{3}{2n+2})^{2}$\\
 $=[(\lambda-\frac{n-1}{2n+2})(\lambda-\frac{1}{2n+2})-(n-1)\frac{1}{(2n+2)^2}]
    (\lambda-\frac{3}{2n+2})^2(\lambda-\frac{1}{2n+2})^{n-4} $\\
 $=(\lambda^2-\frac{n}{2n+2}\lambda) (\lambda-\frac{3}{2n+2})^2(\lambda-\frac{1}{2n+2})^{n-4}$\\
 $=\lambda(\lambda-\frac{n}{2n+2})(\lambda-\frac{3}{2n})^2(\lambda-\frac{1}{2n+2})^{n-4}$\\
 $=0$\\

 Clearly,  the roots are $\{{\frac{n}{2n+2}^{[1]}, \frac{3}{2n+2}^{[2]}, \frac{1}{2n+2}^{[n-4]}, 0^{[1]}}\}$.

To find the spectrum of density operator corresponding to the addition of a nonadjacent edge on the peripheral vertices of star-like graphs is equal to solve the equation:
$det(\lambda I-\rho)=0$\\

 that is\\

 \begin{eqnarray}
\small {det(\lambda I-\rho)=}
\left|\arraycolsep0pt
\begin{array}{cccccccc}
\lambda-\frac{n-1}{2n+2}& \frac{1}{2n+2} & \ldots&\frac{1}{2n+2} &\frac{1}{2n+2}&\frac{1}{2n+2}
\\
\frac{1}{2n+2}& \lambda-\frac{1}{2n+2}&\ldots & 0 &0&0
\\
\vdots &\vdots & \lambda-\frac{1}{2n+2}  &\vdots &\vdots&\vdots
\\
\frac{1}{2n+2}&0 &\ldots& \lambda- \frac{3}{2n+2}& \frac{1}{2n+2}&\frac{1}{2n+2}
  \\
\frac{1}{2n+2}&0 &\ldots&\frac{1}{2n+2}& \lambda- \frac{2}{2n+2}& 0
  \\
\frac{1}{2n+2}&0 &\dots &\frac{1}{2n+2}&0  & \lambda-\frac{2}{2n+2}
\end{array}
\right|=0
\end{eqnarray}\\
 $=(\lambda-\frac{n-1}{2n+2})(\lambda-\frac{1}{2n+2})^{n-4}
 (\lambda-\frac{2}{2n+2})(\lambda-\frac{4}{2n+2})(\lambda-\frac{1}{2n+2}) $\\
 $-(n-4)(\frac{1}{2n+2})^{2}(\lambda-\frac{1}{2n+2})^{n-5} (\lambda-\frac{2}{2n+2})(\lambda-\frac{4}{2n+2})(\lambda-(\frac{1}{2n+2})$\\
 $-3(\frac{1}{2n+2})^{2}(\lambda-\frac{1}{2n+2})^{n-5} (\lambda-\frac{2}{2n+2})(\lambda-\frac{4}{2n+2})(\lambda-(\frac{1}{2n+2})$\\
 $=(\lambda-\frac{n-1}{2n+2})(\lambda-\frac{1}{2n+2})^{n-4}(\lambda-\frac{2}{2n+2})(\lambda-\frac{4}{2n+2})(\lambda-\frac{1}{2n+2})$\\
 $-(n-1)(\frac{1}{2n+2})^{2}(\lambda-\frac{1}{2n+2})^{n-4}(\lambda-\frac{2}{2n+2})(\lambda-\frac{4}{2n+2})$\\
 $=[(\lambda-\frac{n-1}{2n+2})(\lambda-\frac{1}{2n+2})-(n-1)\frac{1}{(2n+2)^2}]
    (\lambda-\frac{1}{2n+2})^{n-4}(\lambda-\frac{2}{2n+2})(\lambda-\frac{4}{2n+2}) $\\
 $=(\lambda^2-\frac{n}{2n+2}\lambda) (\lambda-\frac{1}{2n+2})^{n-4}(\lambda-\frac{2}{2n+2})(\lambda-\frac{4}{2n+2})$\\
 $=\lambda(\lambda-\frac{n}{2n+2})(\lambda-\frac{1}{2n+2})^{n-4}(\lambda-\frac{2}{2n+2})(\lambda-\frac{4}{2n+2})$\\
 $=0$\\
 Clearly,  the roots are $\{{\frac{n}{2n+2}^{[1]}, \frac{4}{2n+2}^{[1]},\frac{2}{2n+2}^{[1]}, \frac{1}{2n+2}^{[n-4]}, 0^{[1]}}\}$.

 Hence, the spectrum corresponding to star-alike graph is  $\{{\frac{n}{2n+2}^{[1]}, \frac{3}{2n+2}^{[2]}, \frac{1}{2n+2}^{[n-4]}, 0^{[1]}}\}$ or  $\{{\frac{n}{2n+2}^{[1]},\frac{4}{2n+2}^{[1]},\frac{2}{2n+2}^{[1]}, \frac{1}{2n+2}^{[n-4]}, 0^{[1]}}\}$
  \qed
 \end{proof}
 Now that the faithful (and exact) mapping and the specific quantum states connected with star-alike graph is presented, the preparation work for investigating on the relationship between star-alike graph and star-like graph is ready to deploy.
 \subsection{Graph operation and Von Neumann entropy, LOCC}
 \label{sec:14}
   Analogously, in this section, we investigate the relationship  between the graph operation, i.e., adding an edge on star-like graph, and Von Neumann entropy, LOCC.
 \subsubsection{Graph operation and Von Neumann entropy }
 \label{sec:15}
 There are two ways of adding an edge on star-like graph, we have
\begin{Theorem} The corresponding Von Neumann entropy increases under the graph operation adding an edge on star-like graph.\\
   \end{Theorem}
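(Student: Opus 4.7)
The plan is to mirror the proof structure of Theorem 2: compute the Von Neumann entropy on both sides, take their difference, show it is monotone in $n$, and then pin down its sign by combining a limit and a base case. Since Theorem 3 gives two spectra for the star-alike graph (adjacent and nonadjacent edge addition), the argument must be carried out twice in parallel, but the overall recipe is the same.

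First I would substitute the spectra into $S(\rho) = -\sum_x \lambda_x \log \lambda_x$. For $\rho_{\text{star-like}}$ (eigenvalues $\tfrac{1}{2}, \tfrac{3}{2n}, \tfrac{1}{2n}$ with multiplicity $n-3$, and $0$, from Theorem 1) this produces a closed form of the type $\tfrac{1}{2}\log 2 - \tfrac{3}{2n}\log 3 + \tfrac{1}{2}\log(2n)$. Doing the same substitution for the two star-alike spectra, each entropy splits cleanly into a dominant $\tfrac{1}{2}\log(2n+2)$ normalization term plus a small correction made of $\log 2, \log 3, \log(n)$ and rational coefficients in $n$. I would then form
\[
\Delta S := S(\rho_{\text{star-alike}}) - S(\rho_{\text{star-like}}),
\]
collect the $\log(2n+2) - \log(2n)$ part separately from the combinatorial part, so that $\Delta S$ is displayed as a sum of three bracketed expressions, directly analogous to the decomposition used in Theorem 2.

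Next I would treat $n$ as a continuous variable and compute $(\Delta S)'$. I expect a rational–logarithmic expression whose dominant negative contribution comes from the $\log n$ term (as it did for the star to star-like transition), giving $(\Delta S)' < 0$ for all sufficiently large $n$; the monotonicity threshold can be read off directly from the explicit derivative. Then I would verify $\lim_{n \to \infty} \Delta S = 0$, which is immediate because every correction term decays like $1/n$ or $\log(n)/n$. Combined with monotone decrease, this forces $\Delta S > 0$ provided the base case $n = 4$ (the smallest $n$ for which a star-alike graph exists) evaluates to a positive number — a finite check that I would perform by direct numerical substitution of the two spectra.

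The main obstacle is the derivative step: the sum $\Delta S$ contains denominators $2n$ and $2n+2$ and the $\log n$ coefficients do not cancel as cleanly as in Theorem 2, so the sign of $(\Delta S)'$ is not visible at a glance. I would handle this by grouping the $1/(2n) - 1/(2n+2)$ pairs first (they contribute a uniformly negative piece) and then bounding the remaining $\log 3$ term by the $\log n$ term using $\log 3 < \log n$ for $n \geq 4$. A secondary issue is bookkeeping: the adjacent-edge and nonadjacent-edge cases differ only in whether the eigenvalues $\tfrac{3}{2n+2}$ appear with multiplicity two or are replaced by the pair $\tfrac{4}{2n+2}, \tfrac{2}{2n+2}$, so I would state one lemma-style computation and indicate the parallel substitutions, rather than duplicating the estimate verbatim.
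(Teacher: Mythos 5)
Your proposal follows the paper's proof essentially verbatim: it substitutes the spectra from Theorems~1 and~4 into $S(\rho)=-\sum_x\lambda_x\log\lambda_x$, forms the difference $\Delta S$, argues $(\Delta S)'<0$ in $n$, combines $\lim_{n\to\infty}\Delta S=0$ with a positive base case, and treats the adjacent and nonadjacent cases in parallel. The only minor slip is calling $n=4$ the smallest admissible case for both variants --- the nonadjacent (star-alike$_2$) graph first exists at $n=5$, which is why the paper checks the base cases $n=4$ and $n=5$ separately.
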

  \begin{proof}
   :It can be easily verified that the spectrum of density operator corresponding star-like graphs is$\{{\frac{1}{2}^{[1]}, \frac{3}{2n}^{[1]}, \frac{1}{2n}^{[n-3]}, 0^{[1]}}\}$, and we have already illustrated the spectrum of density operator corresponding to star-alike graphs $\{{\frac{n}{2n+2}^{[1]}, \frac{3}{2n+2}^{[2]}, \frac{1}{2n+2}^{[n-4]}, 0^{[1]}}\}$ or $\{{\frac{n}{2n+2}^{[1]}, \frac{4}{2n+2}^{[1]},\frac{2}{2n+2}^{[1]}, \frac{1}{2n+2}^{[n-4]}, 0^{[1]}}\}$.

    According to the definition of Von Neumann entropy, the entropy corresponding to quantum states  represented by star graph is presented as follows:\\
 $ S(\rho_{star-like})=-\sum_x\lambda_{x}\log\lambda_{x}$\\
                    $ =-\frac{1}{2}\log(\frac{1}{2})-\frac{3}{2n}\log(\frac{3}{2n})-(n-3)\frac{1}{2n}\log(\frac{1}{2n})$\\
                    $=\frac{1}{2}\log2+\frac{3}{2n}\log(\frac{2n}{3})+\frac{n-3}{2n}\log(2n)$\\
                    $=\frac{1}{2}\log2+\frac{3}{2n}\log(\frac{1}{3})+\frac{1}{2}\log(2n)$\\
    \\ the entropy corresponding to quantum states represented by star-alike graphs is presented as follows:\\

     $ S(\rho_{star-alike})=-\sum_x\lambda_{x}\log\lambda_{x}$\\
       $ =\frac{n}{2n+2}\log(\frac{2n+2}{n})+\frac{6}{2n+2}\log(\frac{2n+2}{3})+\frac{n-4}{2n+2}\log(2n+2)$\\
         $=\log(2n+2)-\frac{n}{2n+2}\log(n)-\frac{6}{2n+2}\log3$\\

     or\\

      $ S(\rho_{star-alike})=-\sum_x\lambda_{x}\log\lambda_{x}$\\
       $ =\frac{n}{2n+2}\log(\frac{2n+2}{n})+\frac{4}{2n+2}\log(\frac{2n+2}{4})+\frac{2}{2n+2}\log(\frac{2n+2}{2})+\frac{n-4}{2n+2}\log(2n+2)$\\
        $=\log(2n+2)-\frac{n}{2n+2}\log(n)-\frac{10}{2n+2}\log2$\\
      \\ Then,\\
     $ S(\rho_{star-alike})-S(\rho_{star-like})=$\\
    $=[\log(2n+2)-\frac{n}{2n+2}\log(n)-\frac{6}{2n+2}\log3]-[\frac{1}{2}\log2+\frac{3}{2n}\log(\frac{1}{3})+\frac{1}{2}\log(2n)]$

     or\\
     $=[\log(2n+2)-\frac{n}{2n+2}\log(n)-\frac{10}{2n+2}\log2]-[\frac{1}{2}\log2+\frac{3}{2n}\log(\frac{1}{3})+\frac{1}{2}\log(2n)]$\\

    $ (S(\rho_{star-alike})-S(\rho_{star-like}))'$\\
    $= [\frac{2}{2n+2}-\frac{1}{2n+2}-\frac{2}{(2n+2)^2}\log(n)+\frac{12}{(2n+2)^2}\log3]-[\frac{3}{2n^2}\log3+\frac{1}{2n}]$\\
    $= \frac{-2}{(2n+2)2n}+\frac{24\log(3) n^2-3(2n+2)^2\log3-4n^2\log(n)}{(2n+2)^22n^2}$, $n\geq 4$\\

     or\\
     $= [\frac{2}{2n+2}-\frac{1}{2n+2}-\frac{2}{(2n+2)^2}\log(n)+\frac{24}{(2n+2)^2}\log2]-[\frac{3}{2n^2}\log3+\frac{1}{2n}]$\\
    $= \frac{-2}{(2n+2)2n}+\frac{48\log(2) n^2-3(2n+2)^2\log3-4n^2\log(n)}{(2n+2)^22n^2}$,$n\geq 5$\\

      Then, $ (S(\rho_{star-alike})-S(\rho_{star-like}))'<0$, therefore the entropy difference is decreasing with each increment in $n$. Also,
     $\lim_{n\rightarrow \infty} [S(\rho_{star-like})-S(\rho_{star})]=0$ and when $n=4$ or $n=5$, $[S(\rho_{star-alike})-S(\rho_{star-like})]>0$,
     hence the entropy difference always remains positive, that is to say the corresponding Von Neumann entropy increases under the graph operation adding an edge on star-like graph. \qed
\end{proof}
%

 \subsubsection{Graph operation and LOCC}
\label{sec:16}
   \begin{Theorem} Suppose $|\psi\rangle$ and $|\phi\rangle$ are pure entangled states shared by Alice and Bob. Let $\rho_{star-like}=tr_B(|\psi\rangle\langle\psi|)$,
    $\rho_{star-alike}=tr_B(|\phi\rangle\langle\phi|)$ corresponding to Alice's system, The quantum states  cannot be transformed to states represented by star-like
    graph by the way of LOCC when $n\geq 5$.\\
\end{Theorem}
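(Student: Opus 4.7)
The plan is to reduce the question to a majorization check via Nielsen's theorem, then to isolate the single critical index of partial-sum comparison. By Nielsen, the LOCC transformation $|\phi\rangle \to |\psi\rangle$ is possible iff $\lambda_\phi \prec \lambda_\psi$. Interpreting the statement as asking whether the star-alike state can be converted into the star-like state by LOCC (reading ``into states represented by star-like graph'' as the destination of the transformation, in parallel with the direction of Theorem 3), the goal is to show $\lambda_{star-alike} \not\prec \lambda_{star-like}$ whenever $n \geq 5$, and this must be verified for both the adjacent-edge variant (star-alike$_1$) and the nonadjacent-edge variant (star-alike$_2$).

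First, I would import the spectra from Theorem 4 (for star-alike) and Theorem 1 (for star-like), using the identity $\tfrac{1}{2} = \tfrac{n+1}{2n+2}$ to place $\lambda_\psi$ on the common denominator $2n$ and $\lambda_\phi$ on the common denominator $2n+2$, and sort each in decreasing order. Then, following the template used in the proof of Theorem 3, I would tabulate the partial sums $\sigma_k(\phi) = \sum_{j=1}^{k} \lambda^\downarrow_{\phi,j}$ and $\sigma_k(\psi) = \sum_{j=1}^{k} \lambda^\downarrow_{\psi,j}$ and compare them index by index.

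The pattern I expect to emerge is that for $k=1,2$ the required inequality $\sigma_k(\phi) \leq \sigma_k(\psi)$ is automatic (at $k=1$ it reduces to $n \leq n+1$; at $k=2$ it becomes a trivially true cross-product inequality in both subcases, reflecting the fact that the common denominator $2n+2$ on the star-alike side exceeds $2n$ on the star-like side), but at $k=3$ one arrives at the critical comparison $\tfrac{n+6}{2n+2} \leq \tfrac{n+4}{2n}$, which after cross-multiplication becomes $2n \leq 8$, i.e.\ $n \leq 4$. Hence $\sigma_3(\phi) > \sigma_3(\psi)$ whenever $n \geq 5$, so $\lambda_\phi \not\prec \lambda_\psi$ and Nielsen's theorem rules out the LOCC transformation. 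Because the only parameter distinguishing $|\psi\rangle$ from $|\phi\rangle$ is the graph operation of adding an edge to the star-like graph, this simultaneously shows that the graph operation itself cannot be simulated by LOCC in the same range.

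The main obstacle will be the bookkeeping required to handle both star-alike variants in one pass: one must verify that $\tfrac{n}{2n+2}$ is genuinely the largest entry of $\lambda_\phi$ and that the second and third entries ($\tfrac{3}{2n+2}, \tfrac{3}{2n+2}$ in the adjacent case, or $\tfrac{4}{2n+2}, \tfrac{2}{2n+2}$ in the nonadjacent case) stay correctly ordered relative to $\tfrac{1}{2n+2}$ for all relevant $n$. Fortunately, the two subcases yield the very same third partial sum $\sigma_3(\phi) = \tfrac{n+6}{2n+2}$, so once the sorted lists are fixed, a single calculation at $k=3$ pins down the threshold $n \geq 5$ and completes the proof.
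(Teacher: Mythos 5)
Your proposal is correct and follows essentially the same route as the paper: reduce to the majorization check $\lambda_{\phi}\prec\lambda_{\psi}$ via Nielsen's theorem, verify the first two partial sums are automatic, and locate the critical comparison at $k=3$, namely $\tfrac{n+6}{2n+2}\leq\tfrac{n+4}{2n}$, which cross-multiplies to $n\leq 4$ and so fails for $n\geq 5$ in both star-alike variants. Your observation that the two variants share the same third partial sum $\tfrac{n+6}{2n+2}$ is a small tidying of the paper's case-by-case presentation, but the argument is the same.
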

   \begin{proof}
   :The density operator and spectrun of the density operator corresponding to star and star-like graphs are given as follows:

   $\rho_{star-like}=tr_B(|\psi\rangle\langle\psi|)$ and
   $\lambda_{\psi}=\{{\frac{1}{2}^{[1]}, \frac{3}{2n}^{[1]}, \frac{1}{2n}^{[n-3]}, 0^{[1]}}\}$

 $\rho_{star-alike}=tr_B(|\phi\rangle\langle\phi|)$  and
 $\lambda_{\phi}=\{{\frac{n}{2n+2}^{[1]}, \frac{3}{2n+2}^{[2]}, \frac{1}{2n+2}^{[n-4]}, 0^{[1]}}\}$\\
 or\\

 $\lambda_{\phi}=\{{\frac{n}{2n+2}^{[1]}, \frac{4}{2n+2}^{[1]},\frac{2}{2n+2}^{[1]}, \frac{1}{2n+2}^{[n-4]}, 0^{[1]}}\}$\\
   With Nielsen theorem, the issue to decide the capability of $|\psi\rangle$ being transformed $|\phi\rangle$ by LOCC is
   equivalent to determine whether $\lambda_\phi\prec\lambda_\psi$.

  Thus, we have\begin{center} $\frac{n}{2n+2}\leq\frac{1}{2}$      for each $n$ \end{center}
             \begin{center}  $\frac{n+3}{2n+2}\leq\frac{n+3}{2n}$  for each $n$ \end{center}
             \begin{center} $\frac{n+6}{2n+2}\leq\frac{n+4}{2n}$  for $n\leq4$.  \end{center}
              \begin{center} $ \vdots  $ \end{center}
              \begin{center} $\frac{n+k+3}{2n+2}\leq\frac{n+k+1}{2n}$ for $n\leq k+1$ \end{center}

Clearly,  $\lambda_\phi\prec\lambda_\psi$ if and only if $n=4$.
 Hence, $|\psi\rangle$ and $|\phi\rangle$ can not be transformed by LOCC when $n\geq 5$, for simplicity, we conclude that the graph operation cannot be simulated by LOCC.\\

              or
              \\
              \begin{center} $\frac{n}{2n+2}\leq\frac{1}{2}$      for each $n$ \end{center}
             \begin{center}  $\frac{n+4}{2n+2}\leq\frac{n+3}{2n}$  for each $n$ \end{center}
             \begin{center} $\frac{n+6}{2n+2}\leq\frac{n+4}{2n}$  for no $n$, since $n\leq4$ and $n\geq5$.  \end{center}
              \begin{center} $ \vdots  $ \end{center}

 Clearly, there is no $n$ satisfying $\lambda_\phi\prec\lambda_\psi$.
 Hence, $|\psi\rangle$ and $|\phi\rangle$ cannot be transformed by LOCC. For simplicity, we conclude that the graph operation cannot be simulated by LOCC when $n\geq 5$.\qed
   \end{proof}
 \subsubsection{Relationship of Von Neumann entropy, LOCC between $Star-alike_1$ and $Star-alike_2$ graph}
 \label{sec:17}
   \begin{Theorem} Suppose $|\phi_1\rangle$ and $|\phi_2\rangle$ are pure entangled states shared by Alice and Bob. Let $\rho_{star-alike_1}=tr_B(|\phi_1\rangle\langle\phi_1|)$,
    $\rho_{star-alike_2}=tr_B(|\phi_2\rangle\langle\phi_2|)$ corresponding to Alice's system, the quantum states represented by one kind of star-alike graph can be transformed into another  by the way of LOCC. Also $S(\rho_{star-alike_1})> S(\rho_{star-alike_2})$. \\
\end{Theorem}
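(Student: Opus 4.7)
The plan is to reduce everything to the two spectra supplied by Theorem~4 and then invoke Nielsen's theorem together with Lemma~1. First I would sort both spectra in decreasing order and observe that the leading entry $\frac{n}{2n+2}$, the entire block of $(n-4)$ eigenvalues equal to $\frac{1}{2n+2}$, and the trailing $0$ coincide in the two lists; the only places where they differ are positions two and three, where $\lambda_{\phi_1}$ contributes $\frac{3}{2n+2},\frac{3}{2n+2}$ while $\lambda_{\phi_2}$ contributes $\frac{4}{2n+2},\frac{2}{2n+2}$.

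Second, I would verify $\lambda_{\phi_1}\prec\lambda_{\phi_2}$. The common first entry yields equal partial sums at $k=1$; at $k=2$ the $\phi_2$-sum strictly exceeds the $\phi_1$-sum by $\tfrac{1}{2n+2}$; at $k=3$ the two sums coincide again at $\frac{n+6}{2n+2}$; and from $k=4$ onward each side adds the same $\frac{1}{2n+2}$ at every step, so equality persists until $k=n$. Hence the majorization holds termwise. Applying Nielsen's theorem in the convention stated in Section~2 ($\lambda_\psi\prec\lambda_\phi \Leftrightarrow |\psi\rangle\to|\phi\rangle$) immediately produces the LOCC protocol that transforms $|\phi_1\rangle$ into $|\phi_2\rangle$.

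Third, for the strict entropy comparison I would take the two closed-form expressions for $S(\rho_{star-alike_1})$ and $S(\rho_{star-alike_2})$ already obtained in the proof of Theorem~5 and subtract. The shared terms $\log(2n+2)$ and $\frac{n}{2n+2}\log n$ cancel, leaving a difference of the form $(10\log 2 - 6\log 3)/(2n+2)$, which is strictly positive because $2^{10}=1024>729=3^6$. This promotes the weak inequality $S(\rho_{\phi_1})\ge S(\rho_{\phi_2})$ that Lemma~1 would give directly from the majorization to the strict inequality $S(\rho_{star-alike_1}) > S(\rho_{star-alike_2})$ claimed in the theorem.

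No step constitutes a substantive obstacle, since Theorem~4 has done the spectral heavy lifting and the two differing positions conveniently straddle the same cumulative mass. The one bookkeeping point I would watch carefully is the orientation of Nielsen's theorem under the paper's convention: the majorization verified above forces the LOCC transformation to run from the higher-entropy state $|\phi_1\rangle$ into the lower-entropy state $|\phi_2\rangle$, in agreement with the accompanying entropy inequality, and the write-up should state the direction of the protocol explicitly to avoid the easy slip of reversing source and target.
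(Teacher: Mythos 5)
Your proposal is correct, and the majorization half coincides with the paper's argument: both sort the two spectra from Theorem~4, observe that they differ only in positions two and three, check the partial sums $\frac{n}{2n+2},\ \frac{n+3}{2n+2}\le\frac{n+4}{2n+2},\ \frac{n+6}{2n+2}=\frac{n+6}{2n+2},\dots$, and conclude $\lambda_{\phi_1}\prec\lambda_{\phi_2}$, hence $|\phi_1\rangle\to|\phi_2\rangle$ by LOCC via Nielsen's theorem. Where you genuinely diverge is the entropy claim. The paper deduces $S(\rho_{star\text{-}alike_1})>S(\rho_{star\text{-}alike_2})$ by citing Lemma~1, but Lemma~1 as stated only delivers the weak inequality $S(\rho_{\phi_1})\ge S(\rho_{\phi_2})$, so the strict inequality in the theorem is not actually justified there. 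Your third step repairs this: subtracting the closed-form entropies from the proof of Theorem~5 gives
\begin{equation*}
S(\rho_{star\text{-}alike_1})-S(\rho_{star\text{-}alike_2})=\frac{10\log 2-6\log 3}{2n+2}=\frac{\log(1024/729)}{2n+2}>0,
\end{equation*}
which is an explicit, strictly positive quantity independent of any majorization lemma. So your route costs one extra elementary computation but buys the strictness that the paper's appeal to Lemma~1 cannot supply; your closing remark about keeping the direction of Nielsen's theorem straight (higher-entropy source $|\phi_1\rangle$ to lower-entropy target $|\phi_2\rangle$) is also consistent with the paper's convention.
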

\begin{proof}
   :The density operator and spectrum of the density operator corresponding to star and star-like graphs are

   $\rho_{star-alike_1}=tr_B(|\phi_1\rangle\langle\phi_1|)$ and
  $\lambda_{\phi_1}=\{{\frac{n}{2n+2}^{[1]}, \frac{3}{2n+2}^{[2]}, \frac{1}{2n+2}^{[n-4]}, 0^{[1]}}\}$,\\

 $\rho_{star-alike_2}=tr_B(|\phi_2\rangle\langle\phi_2|)$  and
 $\lambda_{\phi_2}=\{{\frac{n}{2n+2}^{[1]}, \frac{4}{2n+2}^{[1]},\frac{2}{2n+2}^{[1]}, \frac{1}{2n+2}^{[n-4]}, 0^{[1]}}\}$,\\
respectively.\\

   With Nielsen theorem, the issue to decide the capability of $|\phi_1\rangle$ being transformed $|\phi_2\rangle$ by LOCC is
   equivalent to determine whether $\lambda_{\phi1}\prec\lambda_{\phi2}$.

  Thus, we have\begin{center} $\frac{n}{2n+2}\leq\frac{n}{2n+2}$      for each $n$ \end{center}
             \begin{center}  $\frac{n+3}{2n+2}\leq\frac{n+4}{2n+2}$  for each $n$ \end{center}
             \begin{center} $\frac{n+6}{2n+2}\leq\frac{n+6}{2n+2}$   for each $n$ \end{center}
              \begin{center} $ \vdots  $ \end{center}
              \begin{center} $\frac{n+k+2}{2n+2}\leq\frac{n+k+2}{2n+2}$  for each $n$ \end{center}
Clearly, $\lambda_{\phi1}\prec\lambda_{\phi2}$ for each $n$, thus, with \textbf{Lemma 1} in section 2, we have $S(\rho_{star-alike_1})> S(\rho_{star-alike_2})$.
 That is to say $|\phi_1\rangle$ and $|\phi_2\rangle$ can be transformed by LOCC. For simplicity, we say that the graph operation can be simulated by LOCC when $n\geq 5$ and the entropy of adding nonadjacent edge on star-like graph is larger than that of adding adjacent one.\qed
   \end{proof}
 \section{Star-mlike graph, adding $m$ edges on star graph}
\label{sec:18}
 In this section, we investigate the specific quantum state corresponding to the star-mlike graph(Fig.5), which means the addition of nonadjacent edges on peripheral vertices of star graph. Similarly, the relationship between the graph operation, i.e., adding $m$ nonadjacent edges on star graph,and Von Neumann entropy, LOCC is illustrated.

 \begin{figure}
  \includegraphics[height=4cm,width=15cm]{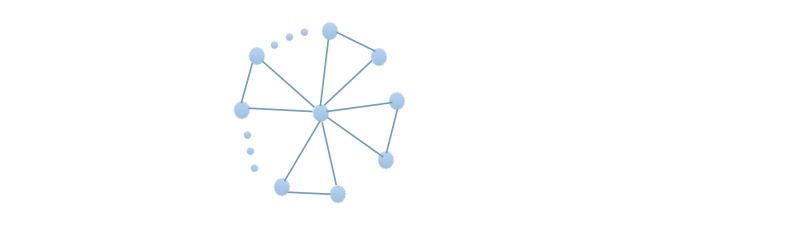}
 \caption{Star-mlike graph}
\label{fig:1}       
\end{figure}
 \subsection{Spectrum of density operator corresponding to adding $m$ nonadjacent edges on star graph}
 \label{sec:19}
  \begin{Theorem}
The spectrum of density operator corresponding to star-mlike graph, which means the addition of $m$ nonadjacent edges on peripheral vertices of star graph, is $\{{\frac{n}{2n+2(m-1)}^{[1]}, \frac{3}{2n+2(m-1)}^{[m]}, \frac{1}{2n+2(m-1)}^{[n-m-2]}, 0^{[1]}}\}$, $m\leq \lfloor \frac{n-1}{2}\rfloor$ .\\
 \end{Theorem}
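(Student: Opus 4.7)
The plan is to reduce the $n\times n$ characteristic determinant of $\rho$ to a small product of factors by exploiting the block structure induced by the central vertex. First I would label the centre as vertex $1$ (of degree $n-1$), pair the $2m$ endpoints of the added edges as $(2,3),(4,5),\ldots,(2m,2m+1)$, and call the remaining $n-1-2m$ vertices pure leaves. Setting $d:=d_{G}=2n+2(m-1)$, this makes $\rho$ have $\rho_{11}=(n-1)/d$, off-diagonal entries $-1/d$ throughout the first row and column, diagonal $2/d$ at each paired peripheral vertex and $1/d$ at each leaf, and off-diagonal $-1/d$ exactly inside the $m$ paired positions. The bordered form is parallel to the one used in Theorems~1 and~4, but the decoupling into paired blocks is what allows a clean general-$m$ treatment.

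Next I would view $\lambda I-\rho$ as a bordered block matrix with the first row and column as the border and the $(n-1)\times(n-1)$ remaining block $D(\lambda)$ block-diagonal, consisting of $m$ copies of $\bigl(\begin{smallmatrix}\lambda-2/d & 1/d\\ 1/d & \lambda-2/d\end{smallmatrix}\bigr)$ and $n-1-2m$ scalar entries $\lambda-1/d$. Each $2\times 2$ block has determinant $(\lambda-1/d)(\lambda-3/d)$, so $\det D(\lambda)=(\lambda-1/d)^{n-1-m}(\lambda-3/d)^{m}$. I would then apply the Schur complement identity
\begin{equation*}
\det(\lambda I-\rho)=\det D(\lambda)\cdot\Bigl(\lambda-\tfrac{n-1}{d}-b^{\top}D(\lambda)^{-1}b\Bigr),\qquad b=\tfrac{1}{d}\mathbf{1}_{n-1}.
\end{equation*}

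The key computation is $\mathbf{1}^{\top}D(\lambda)^{-1}\mathbf{1}$. Each $2\times 2$ block contributes the sum of all four entries of its inverse, namely $[2(\lambda-2/d)-2/d]/[(\lambda-1/d)(\lambda-3/d)]=2/(\lambda-1/d)$; each scalar block contributes $1/(\lambda-1/d)$. Summing gives $(2m+n-1-2m)/(\lambda-1/d)=(n-1)/(\lambda-1/d)$. Substituting back, the Schur correction becomes $(\lambda-(n-1)/d)-(n-1)/[d^{2}(\lambda-1/d)]$, whose numerator over the common denominator $\lambda-1/d$ is $\lambda^{2}-n\lambda/d=\lambda(\lambda-n/d)$ after the $(n-1)/d^{2}$ terms cancel exactly. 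Combining, $\det(\lambda I-\rho)=\lambda\,(\lambda-n/d)\,(\lambda-1/d)^{n-m-2}\,(\lambda-3/d)^{m}$, which yields precisely the claimed spectrum $\{(n/d)^{[1]},(3/d)^{[m]},(1/d)^{[n-m-2]},0^{[1]}\}$, and the multiplicities sum to $n$ as a sanity check.

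The step demanding the most care is the Schur cancellation: I need to ensure that the $(n-1)/d^{2}$ term contributed by the border couplings cancels exactly against the cross term from $(\lambda-(n-1)/d)(\lambda-1/d)$, so that the correction factorises cleanly through $\lambda-1/d$ and lowers the multiplicity of $1/d$ by one. This is a short algebraic identity, but writing the common denominator carefully is essential. The hypothesis $m\leq\lfloor(n-1)/2\rfloor$ enters only in guaranteeing that the $m$ nonadjacent edges actually fit on the $n-1$ peripheral vertices and that the exponent $n-m-2$ is nonnegative.
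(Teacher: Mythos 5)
Your proposal is correct and follows essentially the same route as the paper: both treat $\lambda I-\rho$ as a matrix bordered by the central vertex with a block-diagonal peripheral part (scalar blocks for leaves, $2\times 2$ blocks for the $m$ added edges), and both hinge on the same cancellation $(\lambda-\tfrac{n-1}{d})(\lambda-\tfrac{1}{d})-\tfrac{n-1}{d^{2}}=\lambda(\lambda-\tfrac{n}{d})$ after the border contribution collapses to $(n-1)/(\lambda-\tfrac{1}{d})$. Your Schur-complement packaging is just a tidier form of the paper's three-term cofactor expansion (and avoids the exponent slips in the paper's intermediate lines), but it is not a different argument.
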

\begin{proof}

: Finding of the spectrum is equivalent to solve the equation\\
 $det(\lambda I-\rho)=0$. Denote $\alpha=\frac{n-1}{2n+2(m-1)}, \beta=\frac{1}{2n+2(m-1)}, \sigma=\frac{2}{2n+2(m-1)}$
 that is\\
 \begin{eqnarray}
det(\lambda I-\rho)=
\left(
\begin{array}{cccccccc}
\lambda-\alpha & \beta & \ldots&\beta &\beta&\beta &\beta\\
 \beta & \lambda-\beta&\ldots & 0 &0&0&0\\
\vdots &\vdots & \lambda-\beta  &\vdots &\vdots&\vdots &\vdots\\
\beta&0 &\ldots& \lambda- \sigma& \beta&0&0\\
\beta&0 &\dots &\beta & \lambda-\sigma&0&0\\
\beta& \vdots &\vdots & \vdots &\vdots &\vdots&\vdots\\
\beta&0 &\ldots&0&0 & \lambda-\sigma& \beta\\
\beta&0 &\dots &0&0 &\beta & \lambda-\sigma
\end{array}
\right)
\end{eqnarray}\\
$=(\lambda-\frac{n-1}{2n+2(m-1)})(\lambda-\frac{1}{2n+2(m-1)})^{n-2m-1}[(\lambda-\frac{2}{2n+2(m-1)})^{2}-(\frac{1}{2n+2(m-1)})^{2}]^{m}-$\\
 $(n-2m-1)(\frac{1}{2n+2(m-1)})^{2}(\lambda-\frac{1}{2n+2(m-1)})^{n-2m-2}[(\lambda-\frac{2}{2n+2(m-1)})^{2}-(\frac{1}{2n+2(m-1)})^{2}]^{m}$\\
 $-2m(\frac{1}{2n+2(m-1)})^{2}(\lambda-\frac{1}{2n+2(m-1)})^{n-2m}(\lambda-\frac{3}{2n+2(m-1)})[(\lambda-\frac{2}{2n+2(m-1)})^{2}-(\frac{1}{2n+2(m-1)})^{2}]^{m-1}$\\
 $=(\lambda-\frac{n-1}{2n+2(m-1)})(\lambda-\frac{1}{2n+2(m-1)})^{n-m-2}(\lambda-\frac{3}{2n+2(m-1)})^{m}(\lambda-\frac{1}{2n+2(m-1)})$\\
 $-(n-1)(\frac{1}{2n+2(m-1)})^{2}(\lambda-\frac{1}{2n+2(m-1)})^{n-m-2}(\lambda-\frac{3}{2n+2(m-1)})^{m}$\\
 $=[(\lambda-\frac{n-1}{2n+2(m-1)})(\lambda-\frac{1}{2n+2(m-1)})-(n-1)\frac{1}{(2n+2(m-1))^2}]
    (\lambda-\frac{3}{2n+2(m-1)})^m(\lambda-\frac{1}{2n+2(m-1)})^{n-m-2} $\\
 $=(\lambda^2-\frac{n}{2n+2(m-1)}\lambda) (\lambda-\frac{3}{2n+2(m-1)})^m(\lambda-\frac{1}{2n+2(m-1)})^{n-m-2}$\\
 $=\lambda(\lambda-\frac{n}{2n+2(m-1)})(\lambda-\frac{3}{2n+2(m-1)})^m(\lambda-\frac{1}{2n+2(m-1)})^{n-m-2}$\\
 $=0$

Clearly,  the roots are$\{{\frac{n}{2n+2(m-1)}^{[1]}, \frac{3}{2n+2(m-1)}^{[m]}, \frac{1}{2n+2(m-1)}^{[n-m-2]}, 0^{[1]}}\}$. That is to say the spectrum of density operator corresponding to star-mlike graph is$\{{\frac{n}{2n+2(m-1)}^{[1]}, \frac{3}{2n+2(m-1)}^{[m]}, \frac{1}{2n+2(m-1)}^{[n-m-2]}, 0^{[1]}}\}$ . \qed
 \end{proof}

\subsection{Graph operation and Von Neumann entropy, LOCC}
\label{sec:20}
 \subsubsection{Graph operation and Von Neumann entropy }
 \label{sec:21}
  \begin{Theorem}
  The corresponding Von Neumann entropy increases under the graph operation adding an nonadjacent edge on star-mlike graph.\\
   \end{Theorem}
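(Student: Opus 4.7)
The approach parallels the three-step template used in the earlier entropy-monotonicity theorems (the star and star-like cases). First, instantiate the preceding spectrum theorem at the two consecutive edge-counts $m$ and $m+1$, obtaining
\begin{align*}
\lambda^{(m)} &= \Bigl\{\tfrac{n}{2n+2m-2}^{[1]},\ \tfrac{3}{2n+2m-2}^{[m]},\ \tfrac{1}{2n+2m-2}^{[n-m-2]},\ 0^{[1]}\Bigr\},\\
\lambda^{(m+1)} &= \Bigl\{\tfrac{n}{2n+2m}^{[1]},\ \tfrac{3}{2n+2m}^{[m+1]},\ \tfrac{1}{2n+2m}^{[n-m-3]},\ 0^{[1]}\Bigr\},
\end{align*}
both valid on the admissible range $n\geq 2m+3$ enforced by $m+1\leq\lfloor(n-1)/2\rfloor$. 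Writing $d_m := 2n+2m-2$ and using $\sum_x \lambda_x=1$ to collect the $\log d_m$ terms, each entropy reduces to the compact closed form
\[
S_m(n)\;=\;\log d_m\;-\;\tfrac{n}{d_m}\log n\;-\;\tfrac{3m}{d_m}\log 3,
\]
with $S_{m+1}(n)$ obtained by the substitution $m\mapsto m+1$ (so that $d_m$ becomes $d_{m+1}=2n+2m$).

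Second, define $\Delta(n,m):=S_{m+1}(n)-S_m(n)$ and aim to show $\Delta>0$ throughout the admissible range. Mirroring the proof of Theorem~2, the plan is: (i) differentiate $\Delta$ with respect to $n$ and establish $\partial_n\Delta<0$ for $n\geq 2m+3$; (ii) verify $\lim_{n\to\infty}\Delta(n,m)=0$ using the elementary limits $\log\tfrac{n+m}{n+m-1}\to 0$, $\tfrac{2n\log n}{d_m d_{m+1}}\to 0$, and $\tfrac{6(n-1)\log 3}{d_m d_{m+1}}\to 0$; (iii) check by direct substitution that $\Delta(2m+3,m)>0$ at the smallest admissible value of $n$. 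Items~(i)--(iii) together force $\Delta(n,m)>0$ throughout, which is the conclusion of the theorem.

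The main obstacle is step~(i). The derivative $\partial_n\Delta$ decomposes into three rational-plus-logarithmic contributions (from $\log d_{m+1}-\log d_m$, from the $\log n$ coefficients, and from the $\log 3$ coefficients) whose signs are mixed, so a direct sign reading is unavailable. The concrete plan is to clear denominators over $d_m^2\, d_{m+1}^2$ and bound the resulting polynomial numerator against the $\log n$ and $\log 3$ factors on $n\geq 2m+3$. A tempting structural shortcut via Lemma~1 (Schur concavity of the entropy) unfortunately fails: direct inspection of the cumulative partial sums shows that $\lambda^{(m)}$ and $\lambda^{(m+1)}$ are not majorization-comparable --- the inequality reverses at the cumulative index $k=m+2$ precisely when $n\geq 2m+3$ --- so the calculus argument used in Theorem~2 appears to be unavoidable here.
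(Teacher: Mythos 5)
Your plan coincides with the paper's own argument: the paper likewise reduces each entropy to the compact form $S_m(n)=\log d_m-\tfrac{n}{d_m}\log n-\tfrac{3m}{d_m}\log 3$ with $d_m=2n+2(m-1)$, differentiates the difference $S_{m+1}-S_m$ with respect to $n$, asserts the derivative is negative on the range $m\leq\lfloor\tfrac{n-1}{2}\rfloor$, and concludes from the vanishing limit as $n\to\infty$ together with positivity at the smallest admissible $n$. The derivative-sign step you identify as the main obstacle is precisely the step the paper also leaves as a bare assertion (and your remark that the majorization shortcut via Lemma~1 fails is consistent with the paper's own Theorem~9), so your route does not diverge from theirs.
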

   \begin{proof}
   : The spectrum of density operator corresponding to the addition of $m$ nonadjacent edges on star graphs is $\{{\frac{n}{2n+2}^{[1]}, \frac{3}{2n+2}^{[m]}, \frac{1}{2n+2}^{[n-m-2]}, 0^{[1]}}\}$,
    \begin{enumerate}
  \item By adding one edge on star graph, we get star-like graph. Continuing the addition of a nonadjacent edge on star-like graph, we get star-alike graph.
      The spectrum of star-like graph and star-alike graph is  $\{{\frac{n}{2}^{[1]}, \frac{3}{2n+2}^{[1]}, \frac{1}{2n}^{[n-3]}, 0^{[1]}}\}$  and $\{{\frac{n}{2}^{[1]}, \frac{3}{2n+2}^{[2]}, \frac{1}{2n}^{[n-4]}, 0^{[1]}}\}$, respectively. \textbf{Theorem 5 }already proved that the addition of a nonadjacent edge on star graph increases the entropy.
  \item  Adding $m$ nonadjacent edges on star graph, the resulting spectrum of corresponding density operator is$\{{\frac{n}{2n+2(m-1)}^{[1]}, \frac{3}{2n+2(m-1)}^{[m]}, \frac{1}{2n+2(m-1)}^{[n-m-2]}, 0^{[1]}}\}$. Adding $m+1$ nonadjacent edges on star graph, the resulting spectrum of corresponding density operator is $\{{\frac{n}{2n+2m}^{[1]}, \frac{3}{2n+2m}^{[m+1]}, \frac{1}{2n+2m}^{[n-m-3]}, 0^{[1]}}\}$. The entropy difference between them is

        $S(\rho_{dif})$\\
        $=[\frac{n}{2n+2m}\log(\frac{2n+2m}{n})+\frac{3(m+1)}{2n+2m}\log(\frac{2n+2m}{3})+\frac{n-m-3}{2n+2m}\log(2n+2m)]$\\
        $-[\frac{n}{2n+2(m-1)}\log(\frac{2n+2(m-1)}{n})+\frac{3m}{2n+2(m-1)}\log(\frac{2n+2(m-1)}{3})+\frac{n-m-3}{2n+2(m-1)}\log(2n+2(m-1))]$\\
        $=[\log(2n+2m)-\frac{n}{2n+2m}\log(n)-\frac{3(m+1)}{2n+2m}\log3]-[\log(2n+2(m-1))-\frac{n}{2n+2(m-1)}\log(n)-\frac{3m}{2n+2(m-1)}\log3]$\\

        $S(\rho_{dif})'$
        $=[\frac{2}{2n+2m}-\frac{1}{2n+2m}-\frac{2m}{(2n+2m)^2}\log(n)+\frac{6(m+1)}{(2n+2m)^2}\log3]$\\
        $-[\frac{2}{2n+2(m-1)}-\frac{1}{2n+2(m-1)}-\frac{2(m-1)}{(2n+2(m-1))^2}\log(n)+\frac{6m}{(2n+2m)^2}\log3]$\\
        $=-\frac{2}{(2n+2m)(2n+2(m-1))}-\frac{2m(2n+2(m-1))^2-2(m-1)(2n+2m)^2}{(2n+2m)^2(2n+2(m-1))^2}\log(n)+$\\
        $\frac{6(m+1)(2n+2(m-1))^2-6m(2n+2m)^2}{(2n+2m)^2(2n+2(m-1))^2}\log3$\\
        $=\frac{-2(2n+2m)(2n+2(m-1))+[2(m-1)(2n+2m)^2-2m(2n+2(m-1))^2]}{(2n+2m)^2(2n+2(m-1))^2}\log(n)$\\
        $+\frac{[6(m+1)(2n+2(m-1))^2-6m(2n+2m)^2]}{(2n+2m)^2(2n+2(m-1))^2}\log3$\\
         since $n\geq5$ and $ m\leq \lfloor \frac{n-1}{2}\rfloor$, then $ S(\rho_{dif})'<0$. Hence, the entropy difference decreases with the increment in $n$.
          Also, $\lim_{n\rightarrow \infty}  S(\rho_{dif})=0$ and when $n=5$ $ S(\rho_{dif})>0$, the entropy difference always remains positive, therefore, the corresponding Von Neumann entropy increases under the graph operation by adding a nonadjacent edge on star-mlike graph.
      \item The addition of a nonadjacent edge to star-like graph increases the entropy.  The addition of a nonadjacent edge to star-mlike graph also increases the entropy.
          Thus, the entropy always increases with the addition of a nonadjacent edge on star-relevant graph and the edges number remains $m\leq \lfloor \frac{n-1}{2}\rfloor$.
  \end{enumerate}
 \qed
   \end{proof}
 \subsubsection{  Graph operation and LOCC}
 \label{sec:22}
   \begin{Theorem} Suppose $|\psi\rangle$ and $|\phi\rangle$ are pure entangled states shared by Alice and Bob. Let $\rho_{star_{m}}=tr_B(|\psi\rangle\langle\psi|)$,
    $\rho_{star_{m+1}}=tr_B(|\phi\rangle\langle\phi|)$ corresponding to Alice's system, where $m$ and $m+1$ denote the number of nonadjacent edges addition on star graph. The quantum states $\rho_{star_{m+1}}$ cannot be transformed into states represented by $ \rho_{star_{m}}$
     by the way of LOCC.\\
\end{Theorem}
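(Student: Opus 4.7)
The plan is to apply Nielsen's theorem to convert the LOCC question into a majorization comparison and then exhibit a single index at which the required majorization breaks. Set $a := \frac{1}{2n+2(m-1)}$ and $b := \frac{1}{2n+2m}$. Theorem 9 yields the sorted spectra
\begin{align*}
\lambda_\psi^\downarrow &= \bigl(na,\underbrace{3a,\ldots,3a}_{m},\underbrace{a,\ldots,a}_{n-m-2},0\bigr),\\
\lambda_\phi^\downarrow &= \bigl(nb,\underbrace{3b,\ldots,3b}_{m+1},\underbrace{b,\ldots,b}_{n-m-3},0\bigr),
\end{align*}
and to rule out $|\phi\rangle\to|\psi\rangle$ via LOCC it suffices, by Nielsen's theorem, to produce a $k$ with $\sum_{j=1}^k\lambda_{\phi,j}^\downarrow>\sum_{j=1}^k\lambda_{\psi,j}^\downarrow$, i.e., $\lambda_\phi\not\prec\lambda_\psi$.

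The structural observation guiding the choice of $k$ is that the shoulder of $\lambda_\phi$ (the position where the block of $3b$'s ends) lies one step to the right of the shoulder of $\lambda_\psi$. Consequently the partial sums of $\lambda_\phi$ should catch up and overtake those of $\lambda_\psi$ at the first index where this shift starts to matter, namely $k=m+2$: at that point $\lambda_\phi$ still contributes a large term $3b$ while $\lambda_\psi$ has already dropped to the small value $a$. A direct computation yields $P_{m+2}^{\phi}=b(n+3m+3)$ and $P_{m+2}^{\psi}=a(n+3m+1)$, and cross-multiplying reduces the inequality $P_{m+2}^{\phi}>P_{m+2}^{\psi}$ to the transparent condition $2n-2m-6>0$, i.e., $n>m+3$.

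To verify this last inequality under the standing hypothesis $m+1\le\lfloor(n-1)/2\rfloor$, note $n\ge 2m+3$, so $n-m\ge m+3$; for $m\ge 1$ this immediately gives $n-m\ge 4>3$ strictly, and the remaining case $m=0$ is exactly the star versus star-like comparison already settled by Theorem 6 for $n\ge 4$. Thus $P_{m+2}^{\phi}>P_{m+2}^{\psi}$ throughout the admissible range, Nielsen's theorem rules out $|\phi\rangle\to|\psi\rangle$, and the graph operation of adding a further nonadjacent edge cannot be simulated by LOCC in reverse.

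The main obstacle I anticipate is not the central calculation but the bookkeeping at the edges: checking that the candidate index $k=m+2$ always lies in $\{1,\ldots,n-1\}$ so that both partial-sum formulas are legitimate, verifying the strict orderings $na>3a>a>0$ and $nb>3b>b>0$ required to list the eigenvalues in decreasing order (which holds whenever $n>3$, consistent with the constraint on $m$), and confirming that $2n-2m-6>0$ is strict throughout rather than degenerating to equality at some boundary case. Once these routine checks are in place the proof collapses to a single partial-sum comparison.
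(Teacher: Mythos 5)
Your proof is correct and follows essentially the same route as the paper's: both invoke Nielsen's theorem and locate the failure of $\lambda_\phi\prec\lambda_\psi$ at the partial sum of index $k=m+2$, where cross-multiplication reduces the comparison to $n$ versus $m+3$, settled by the admissibility constraint $n\ge 2m+3$. The only blemishes are cosmetic: the spectra come from Theorem~8 (not Theorem~9) and the $m=0$ base case is Theorem~3 (not Theorem~6).
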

   \begin{proof}
   :The density operator and spectrum of the density operator  $\rho_{star_{m}}$ and $\rho_{star_{m+1}}$ corresponding to star-relevant graphs are

   $\rho_{star_m}=tr_B(|\psi\rangle\langle\psi|)$ and

   $\lambda_{\psi}=\{{\frac{n}{2n+2(m-1)}^{[1]}, \frac{3}{2n+2(m-1)}^{[m]}, \frac{1}{2n+2(m-1)}^{[n-m-2]}, 0^{[1]}}\}$

 $\rho_{star_{m+1}}=tr_B(|\phi\rangle\langle\phi|)$ and

 $\lambda_{\phi}=\{{\frac{n}{2n+2m}^{[1]}, \frac{3}{2n+2m}^{[m+1]}, \frac{1}{2n+2m}^{[n-m-3]}, 0^{[1]}}\}$,
respectively.\\

   With Nielsen theorem, the issue to decide the capability of $|\phi\rangle$ being transformed $|\psi\rangle$ by LOCC is
   equivalent to determine whether $\lambda_\phi\prec\lambda_\psi$.

  Thus, we have\begin{center} $\frac{n}{2n+2(m-1)}\geq\frac{1}{2n+2m}$      for each $n$ \end{center}
             \begin{center}  $\frac{n+3}{2n+2(m-1)}\geq\frac{n+3}{2n+2m}$  for each $n$  \end{center}
             \begin{center} $ \vdots  $ \end{center}
             \begin{center} $\frac{n+3(m-1)}{2n+2(m-1)}\geq\frac{n+3(m-1)}{2n+2m}$  for each $n$   \end{center}
             \begin{center} $\frac{n+3m+1}{2n+2(m-1)}\geq\frac{n+3m+3}{2n+2m}$  for no $n$, since $n\leq m+3$ and $n\geq m+4$ \end{center}
              \begin{center} $ \vdots  $ \end{center}

Clearly, there is no $n$ to satisfy $\lambda_\psi\prec\lambda_\phi$.
 Therefore, $|\psi\rangle$ and $|\phi\rangle$ cannot be transformed by LOCC. For simplicity, we conclude that the graph operation cannot be simulated by LOCC for each $n$.\qed
   \end{proof}

\section{Discussion and conclusion}
\label{sec:24}
\qquad
  Our analysis of density matrices corresponding to the star-relevant graphs enables us to investigate the quantum state with a new perspective. The relations between 1) graph operation and special quantum operation(LOCC), 2) graph operation and Von Neumann entropy, 3) the specific quantum state and the graph are appropriately investigated in this paper. Furthermore, we investigate the weighted graph defined by Adhikari\textit{ et al}. \cite{17}, where certain conditions should be satisfied for quantum state representation by utilizing graphs. The necessary conditions put forward by Adhikari \textit{et al}. are presented in Theorem 2.3 \cite{17}.

  Combining those conditions and star-relevant graph together, we conclude that the star-relevant graph is not qualified as quantum state representation with the provided weighted graph definition.
   The reason is explained as follows: suppose that the complex weight of star-relevant graph is $e^{iw}$; all star-relevant graphs contain a triangle as a subgraph.
   Thus, we assign the complex weights corresponding to triangle subgraph as $e^{iw_1},e^{iw_2},e^{iw_3}$, separately. Considering the conditions in Theorem 2.3 \cite{17},
   the complex weight on triangle subgraph of star-relevant graph should satisfy
              \begin{equation} e^{iw_1}e^{iw_2}=e^{iw_3} \end{equation}
                \begin{equation}   e^{iw_2}e^{iw_3}=e^{iw_1}\end{equation}
                  \begin{equation}    e^{iw_3}e^{iw_1}=e^{iw_2}\end{equation}
 In other words, the weight corresponding to star-relevant graph should be real and module should be equal to 1.

 The relationship between graph operation and LOCC revealed in this paper is fascinating, i.e., the graph operation cannot be simulated by LOCC with the star-relevant graph.  It encourages us to further investigate more universal relation between the graph operation and LOCC. Unfortunately, this relationship cannot be maintained when we continue adding edges on star-relevant graphs. Here is an example that violates the relation,

\begin{eqnarray}
D(G)=\frac{1}{22}
\left(
\begin{array}{cccccccc}
6&-1&-1&-1&-1&-1&-1&
\\
-1& 2&0&0&0&0&-1&
\\
-1&0& 2&-1&0&0&0&
\\
-1&0&-1&3&-1&0&0&
\\
 -1&0&0&-1&3&-1&0&
 \\
-1&0&0&0&-1&3&-1&
 \\
 -1&-1&0&0&0&-1&3&
\end{array}
\right),
\end{eqnarray}
the corresponding spectrum is $\{0, 0.0576,0.0909 , 0.1364 , 0.1818 ,0.2151 ,0.3182\}$
and adding an edge on the peripheral vertices, we get a ring union star, the corresponding density matrices is presented as follows:
\begin{eqnarray}
D_r(G)=\frac{1}{24}
\left(
\begin{array}{cccccccc}
6&-1&-1&-1&-1&-1&-1&
\\
-1& 3&-1&0&0&0&-1&
\\
-1&-1& 3&-1&0&0&0&
\\
-1&0&-1&3&-1&0&0&
\\
 -1&0&0&-1&3&-1&0&
 \\
-1&0&0&0&-1&3&-1&
 \\
 -1&-1&0&0&0&-1&3&
\end{array}
\right),
\end{eqnarray}
 and the corresponding spectrum is $\{0,0.0833,0.0833,0.1667,0.1667,0.2083,0.2917\}$. It can be verified that it satisfies the majorization relation, thus the graph operation can be simulated by the way of LOCC, which means that the relationship between the graph operation and LOCC isn't universal. Although there is no general rule in the process of adding an edge from a star graph up until a star union a ring graph.  We propose several conjectures through calculation to depict the situation.

\paragraph{Conjectures}
\label{sec:23}
     \begin{enumerate}
     \item The graph operation, adding edges on the peripheral vertices of star graph, cannot be simulated by LOCC when the incremental edge number $m\leq n-3$ ($n$ is the vertex number). Continue adding edge up until a star union a ring graph, the graph operation can be simulated by LOCC.

     \item In the process of adding edge from a star graph up until a star union a ring, quantum states corresponding to those graphs can be transformed by LOCC when the edges number are the same.

     \end{enumerate}

\begin{acknowledgements}
Project supported by NSFC (Grant Nos. 61272514, 61170272, 61121061, 61411146001), NCET (Grant No. NCET-13-0681), the National Development Foundation for Cryptological Research (Grant No. MMJJ201401012) and the Fok Ying Tung Education Foundation (Grant No. 131067).
\end{acknowledgements}





\end{document}